\newif\iflncs
\newcommand{\lncsqed}{\iflncs\hfill $\Box$\fi}
\newcommand{\prob}[3]{\begin{quote}  \textsc{#1}\\  \textbf{Input:} #2\\  \textbf{Question:} #3\end{quote}}
\newcommand{\from}{\ensuremath{\leftarrow}}
\newtheorem{nclaim}{Claim}
\newtheorem{erule}{Rule}
\newcommand{\G}{\ensuremath{\mathcal{G}}}
\newcommand{\TC}{\textsf{TC}\xspace}
\newcommand{\sTC}{\textsf{strict-TC}\xspace}
\newcommand{\TST}{\textsc{TC Spanning Tree}\xspace}
\newcommand{\BS}{\textsc{Bidirectional Spanner}\xspace}
\newcommand{\BC}{\textsc{Bidirectional Connectivity}\xspace}
\newcommand{\Oh}{\mathcal{O}}
\newcommand{\fes}{\ensuremath{\mathrm{fes}}\xspace}
\newcommand{\mc}{\mathcal{C}}
\newtheorem{theorem}{Theorem}
\newtheorem{lemma}[theorem]{Lemma}
\newtheorem{corollary}[theorem]{Corollary}
\newtheorem{definition}[theorem]{Definition}
\newtheorem{remark}{Remark}
\newtheorem{question}{Question}
\begin{document}

\begin{frontmatter}

\title{In Search of the Lost Tree: Hardness and relaxation of spanning trees in temporal graphs\tnoteref{t1}\tnoteref{t2}}

\tnotetext[t1]{A preliminary version of this work was presented at the 31st Int. Colloquium on Structural Information and Communication Complexity (SIROCCO 2024).}
\tnotetext[t2]{Supported by French ANR, project TEMPOGRAL (ANR-22-CE48-0001) and Swiss NSF, project RECAPT (200021-236640).}

\author[1]{Arnaud Casteigts}

\author[2]{Timothée Corsini}

\author[2,3]{Nils Morawietz}

\affiliation[1]{organization={CS Department, University of Geneva}, city={Geneva}, country={Switzerland}}
\affiliation[2]{organization={LaBRI, University of Bordeaux}, city={Bordeaux}, country={France}}
\affiliation[3]{organization={Institute of Computer Science, Friedrich Schiller University Jena}, city={Jena}, country={Germany}}


\begin{abstract}
  A temporal graph is a graph whose edges appear at certain
  points in time. These graphs are temporally connected (in class
  \TC) if all vertices can reach each other by
  temporal paths (traversing the edges in chronological order).
  Reachability based on temporal paths is not transitive, with
  important consequences. For instance, \TC graphs do not
  always admit \TC spanning trees.

  In this paper, we show that deciding if a given temporal graph
  admits a \TC spanning tree is actually NP-complete. Then, we explore
  possible relaxations. A key feature of \TC spanning trees is to
  support reachability along the same paths in both directions. We
  show that this property is not equivalent to \TC spanning trees, it
  is more general and it can be tested in polynomial time. Still,
  minimizing the size of a spanner preserving this property---a bidirectional spanner---is
  \textsf{NP}-hard even more generally than \TC spanning tree, including the setting of simple temporal graphs.

  Along the way, we show that deciding the existence of \TC spanning tree is FPT
  when parameterized by the feedback edge set number (\fes) of
  the underlying graph, and deciding bidirectional spanners of size $k$ is
  FPT when parameterized by \fes + $\ell$ (the maximum number of
  labels per edge). On the structural side, we show that 
  \TC trees always admit a pivot vertex or a pivot
  edge---reachable by all vertices by a certain time and able to reach all vertices afterward---a fact that may be of independent interest.
\end{abstract}

\begin{keyword}
Temporal graphs \sep Temporal spanners \sep Spanning trees \sep Bidirectional paths \sep Bidirectional connectivity \sep Bidirectional spanners
\end{keyword}

\end{frontmatter}

\section{Introduction}
Temporal graphs are appropriate models for capturing
time-varying phenomena in transportation, social networks, biology, robotics,
scheduling, and distributed computing. In the basic model, a temporal graph
can be represented by a labeled graph
$\G=(G,\lambda)$, where $G=(V,E)$ is a standard finite graph called the footprint (undirected in this work), and
$\lambda : E \to 2^{\mathbb{N}}\setminus \emptyset$ is a function that assigns one or several
time labels to each edge of $E$, interpreted as discrete presence times.
A central concept in these graphs is the one of temporal path (or journey), which is a
sequence $\langle(e_i, t_i)\rangle$ such that
$\langle e_i \rangle$ is a path in $G$, $t_i \in \lambda(e_i)$, and $\langle t_i \rangle$ is
non-decreasing. Such a path is called \emph{strict} if $\langle t_i \rangle$ is increasing. A
graph $\G$ is (strictly) temporally connected if
there exists at least one (strict) temporal path between every ordered pair of
vertices. One can also write $\G \in \TC$ or $\G \in \sTC$, seeing these properties as classes of temporal graphs.

Reachability based on temporal paths is not symmetric: the fact that a
node $u$ can reach a node $v$ does not imply that $v$ can reach $u$,
even when the footprint is undirected. It is also not
transitive: the fact that $u$ can reach $v$ and $v$ can reach $w$ does
not imply that $u$ can reach $w$, both facts having deep structural
and algorithmic consequences. Over the past two decades, a
growing body of work was devoted to better understanding temporal
reachability, considered from various perspectives, e.g.
$k$-connectivity and separators~\cite{KKK02,FMN+20,CCMP20}, connected
components~\cite{BF03,AF16,AGMS17,RML20}, feasibility of distributed
tasks~\cite{CFQS12,GCLL15,ADD+21,BT15}, schedule
design~\cite{BCV21,BCV23}, data
structures~\cite{BFJ03,WDCG12,RML20,BACT22}, mitigation~\cite{EMMZ19},
shortest paths~\cite{BFJ03,CHMZ21}, enumeration~\cite{EMM22},
stochastic models~\cite{BCF11,CRRZ21}, flows~\cite{ACG+19,VDPS21}, and
exploration~\cite{IKW14,DDFS20,ES22,FMS13}.

One of the first questions from the seminal work of Kempe,
Kleinberg, and Kumar~\cite{KKK02}, concerns the existence of sparse
spanning subgraphs, also called \emph{temporal spanners}, defined as
subgraphs of the input temporal graph that preserve temporal
connectivity using as few edges as possible. The authors
of~\cite{KKK02} observed that \TC spanning trees do not always exist in
\TC graphs, and more generally, there exist \TC
graphs with $\Theta(n \log n)$ edges, all of which are necessary for
connectivity. Fifteen years later, Axiotis and Fotakis~\cite{AF16}
established a much stronger negative result, showing that there even
exist \TC graphs of size $\Theta(n^2)$ that cannot be
sparsified. In a sense, this result dashed the hope of defining
analogs of spanning trees in temporal graphs. Subsequent research
focused on positive results for special cases. For example, if the
input graph is a complete graph, then spanners with $O(n \log n)$
edges always exists~\cite{CPS19}. If an Erdös-Rényi graph of
parameters $n$ and $p$ is augmented with random time labels, then a
nearly optimal spanner with $2n + o(n)$ edges exists asymptotically
almost surely, as soon as the graph becomes temporally
connected~\cite{CRRZ21}.

On the algorithmic side, Axiotis and Fotakis~\cite{AF16} and Akrida,
Gąsieniec, Mertzios, and Spirakis~\cite{AGMS17} independently showed
that minimizing the size temporal spanners is APX-hard. Special types
of spanners were also investigated, such as spanners with low
stretch~\cite{BDG+22a} and fault-tolerant spanners~\cite{BDG+22b}.

\subsection{Contributions}

In this article, we revisit one of the motivations
of~\cite{KKK02}, questioning the (in)existence of \TC spanning
trees, defined as a spanning tree of the footprint
whose labels (inherited from the original graph)
preserve temporal connectivity. As these structures are not universal,
a natural question is the difficulty of deciding whether a given \TC graph admits
a \TC spanning tree.

It is somewhat surprising that the answer to this question is still
unknown more than two decades after the work of Kempe \textit{et
  al.}~\cite{KKK02}. Our first result is to fill this gap by showing
that deciding this natural property is NP-complete. It is actually
hard in both the strict and the non-strict setting. In general, these
two settings are incomparable, which often creates some confusion in
the temporal graph literature~\cite{CCS24}. Here, we show at once that
the problem is \textsf{NP}-hard in both settings, using a reduction
that rely on a \emph{proper} temporal graphs, where any two adjacent
edges have different labels, thus the distinction between strict and
non-strict temporal paths disappears. (We suggest to regard this type of
reductions as a good practice, whenever it can be achieved, in order to obtain
unified results.)

Next, we investigate how the concept of a \TC spanning tree could be relaxed.
\TC spanning trees are not just minimum-size spanners; they also guarantee that
any two vertices can reach each other using the same underlying path in both directions,
a convenient feature in network routing. Relaxing trees in
this direction, we investigate the concept of bidirectional connectivity, where every two nodes can reach each other through at least one temporal path that uses the same underlying path in both direction, and the associated concept of bidirectional spanner (or \emph{bi-spanner}), that consists of a temporal spanner preserving this property.

We obtain both positive and negative results. On the positive side, we
show that bidirectional connectivity can be decided in polynomial
time. The algorithm relies on a more basic primitive (a
\emph{bi-path}), that finds a bidirectional path between a given pair
of vertices. Then, we examine the complexity of deciding whether a
bidirectional spanner on $k$ edges exists. The previous hardness
result on \TC spanning trees implies that this problem is also
\textsf{NP}-hard in proper temporal graphs, as it corresponds to the
special case that $k=n-1$. However, we show that bidirectional
spanners is also \textsf{NP}-hard in \emph{simple} temporal graphs
(graphs whose edges have a single label), where \TC spanning
trees are easy to decide. Thus, bidirectional spanners are in that sense harder
than \TC spanning trees.

In the second part of the paper, we present fixed-parameter algorithms for both problems. More precisely, we show that both problems can be decided in $2^{\Oh(r)} \cdot n^{\Oh(1)}$ and $\ell^{\Oh(r)} \cdot n^{\Oh(1)}$~time, respectively, where~$\ell$ is the maximum number of labels per edge and~$r$ is the feedback edge set number of the footprint.

Along these results, we make a number of structural observations
related to \TC trees, bidirectional
connectivity, and bidirectional spanners, which may be of independent interest. 
In particular, we show that every \TC tree contains a pivot, that is, a vertex or an edge that can be reached by all vertices by a specific time and reach all vertices back after that time. 
The constructive bi-path primitive we introduce could also find application in routing
and security, as it allows two distant nodes to rely on a same set of intermediate nodes for communication.

\subsection{Organization of the paper}

The paper is organized as follows. In Section~\ref{sec:definitions}, we define the main concepts and questions investigated in the paper. In particular, we define the three considered problems, which are \TST, \BC, and \BS.
In Section~\ref{sec:tst}, we characterize the landscape of tractability for \TST, showing that this problem is NP-complete in proper temporal graphs (yet, tractable in simple temporal graphs).
In Section~\ref{sec:bs}, we present an algorithm that solves \BC in polynomial time.
The algorithm relies on a polynomial time primitive that decides whether there is a bi-path between a given pair of vertices. 
In Section~\ref{sec:kbs}, we show that \BS is NP-complete even in simple temporal graphs. 
In \Cref{sec:fes}, we present our parameterized algorithms for both problems.
Finally, Section~\ref{sec:conclusion} concludes the paper with further remarks and open questions.

\section{Definitions}
\label{sec:definitions}

Given a temporal graph $\G=((V,E),\lambda)$ defined as above, we denote by $N(v)=\{u\in V \mid uv \in E\}$
the neighbors of $v$ in the footprint and
$\Delta=\max_{v\in V}\{|N(v)|\}$ the maximum degree in the footprint.
The terms of nodes and vertices are used interchangeably. The static
graph $G_t=(V,E_t)$ where $E_t=\{e \in E \mid t \in \lambda(e)\}$ is
the \emph{snapshot} of $\G$ at time $t$. A pair $(e, t)$ such that
$e \in E$ and $t \in \lambda(e)$ is a \emph{contact} in $\G$ (also
called a temporal edge). The range of $\lambda$ is $[\tau]$ for some $\tau$ called the \emph{lifetime} of $\G$. 

A temporal graph $\G'=(G',\lambda')$ is a \emph{temporal subgraph}
of $\G$, noted $\G' \subseteq \G$, if $G'=(V',E')$ is a subgraph of
$G=(V,E)$, $\lambda' \subseteq \lambda$, and $\lambda'$ is restricted to
$E'$. If~$V'=V$ and~$\G'$ is temporally connected, then $\G'$ is a \emph{temporal
  spanner} of $\G$. As explained above, temporal reachability is not
transitive in general. However, it remains possible to compose two
temporal paths when the first finishes before the second starts. 
Next, we define the concept of pivot structures that have a special role for temporal spanning trees and directly implies temporal connectivity.

\begin{definition}[Pivot contact]
Let $\G=((V,E),\lambda)$ be a temporal graph. A contact $(e,t)$ of $\G$ is a \emph{strict (non-strict) pivot contact} if for each vertex~$v\in V$
\begin{itemize}
\item there is a strict (non-strict) temporal path that starts at~$v$ and traverses edge~$e$ at time~$t$ and
\item there is a strict (non-strict) temporal path that ends in~$v$ and traverses edge~$e$ at time~$t$.
\end{itemize}
\end{definition}

\begin{definition}[Pivot vertex]
Let $\G=((V,E),\lambda)$ be a temporal graph. A temporal vertex $(p,t)$ of $\G$ is a strict (non-strict) \emph{pivot vertex} if for each vertex~$v\in V$
\begin{itemize}
\item there is a strict (non-strict) temporal path starting from~$v$ and arriving at vertex~$p$ at a time~$t'$ with~$t'<t$ ($t' \leq t$) and
\item there is a strict (non-strict) temporal path that starts at vertex~$p$ at a time~$t'' \geq t$ and arrives at vertex~$v$.
\end{itemize}
\end{definition}

Unless otherwise mentioned, we refer to pivot contacts just as pivots, precising explicitly when we deal with pivot vertices.

\paragraph{Representation in memory}

There are several ways to represent a temporal graph in memory. One option is a sequence of snapshots, each specified as an adjacency matrix or adjacency list. A second option is as a list of triplets of the form $(u,v,t)$. A third option is as a time-augmented adjacency list, that is, an adjacency list whose nested entries contain a list of times. In the analysis of our bi-path algorithm, we assume that, when iterating over the neighbors of a given vertex, finding whether a label exists within a certain time range can be done in time $O(\tau)$, which is the case, e.g., with adjacency lists (this could even be done in time $O(\log \tau)$ in this case) or with sequences of adjacency matrices. For most of our other results, the representation does not matter, as one can convert them to one another in polynomial time.

\subsection{\TC spanning trees}

A \emph{\TC spanning tree} of $\G$ is a temporal spanner of $\G$ 
whose footprint is a tree. Clearly, such trees are not
universal in \TC graphs, as illustrated by the
graph in Figure~\ref{fig:square},
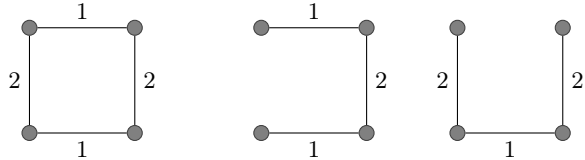
\begin{figure}[h]
  \centering
  \begin{tikzpicture}[scale=1.4]
    \tikzstyle{every node}=[draw=darkgray,fill=gray,circle,inner sep=2pt]
    \path (0,0) node (a){};
    \path (1,0) node (b){};
    \path (1,1) node (c){};
    \path (0,1) node (d){};

    \tikzstyle{every node}=[font=\footnotesize]
    \draw (a)--node[below] {1}(b)--node[right] {2}(c)--node[above] {1}(d)--node[left] {2}(a);
  \end{tikzpicture}~~~~~~~~~~
  \begin{tikzpicture}[scale=1.4]
    \tikzstyle{every node}=[draw=darkgray,fill=gray,circle,inner sep=2pt]
    \path (0,0) node (a){};
    \path (1,0) node (b){};
    \path (1,1) node (c){};
    \path (0,1) node (d){};

    \tikzstyle{every node}=[font=\footnotesize]
    \draw (a)--node[below] {1}(b)--node[right] {2}(c)--node[above] {1}(d);
  \end{tikzpicture}~~~
  \begin{tikzpicture}[scale=1.4]
    \tikzstyle{every node}=[draw=darkgray,fill=gray,circle,inner sep=2pt]
    \path (0,0) node (a){};
    \path (1,0) node (b){};
    \path (1,1) node (c){};
    \path (0,1) node (d){};

    \tikzstyle{every node}=[font=\footnotesize]
    \draw (a)--node[below] {1}(b)--node[right] {2}(c);
    \draw (d)--node[left] {2}(a);
  \end{tikzpicture}
  \caption{\label{fig:square}A temporal graph that is temporally connected (left), in which all temporal subgraphs whose footprint is a tree are \emph{not} temporally connected (middle and right, up to isomorphism).}
\end{figure}
whose underlying spanning trees all induce graphs that are not temporally connected.
Here, a temporal spanning tree would exist if, for example, the two edges labeled $1$ had an additional label of value $3$. We consider the following fundamental problem:
\prob{\TST(TST)}{A temporal graph $\G$.}{Does $\G$ admit a \TC spanning tree?}

\subsection{Bidirectional connectivity}
A bidirectional temporal path (bi-path, for short) between two
vertices $u$ and $v$ is a couple $(p_1,p_2)$ such that $p_1$ is a
temporal path from $u$ to $v$ and $p_2$ is a temporal path from $v$ to
$u$ and both $p_1$ and $p_2$ use the same underlying path (reversed). A temporal graph is bidirectionally connected if all pairs of vertices can reach each other through a bi-path. A bidirectional temporal spanner (bi-spanner) is a temporal spanner that preserves bidirectional connectivity. We
consider the following two additional problems:

\prob{\BC}{A temporal graph $\G$.}{Is $\G$ bidirectionally connected?}

\prob{\BS}{A temporal graph $\G$, an integer $k$.}{Does $\G$ admit a bi-spanner with at most $k$ edges?}

\TC spanning trees are particular cases of bi-spanners, as they
require bi-paths between all pairs of vertices. However, this is not sufficient,
as shown in
Figure~\ref{fig:tct-bp-counter},
\begin{figure}[ht]
  \centering
  \begin{tikzpicture}[every loop/.style={}, thick, node distance=1.2cm]
    \tikzset{vertex/.style={draw, circle, inner sep=0.55mm,fill=black}}

    \node[vertex] (a) at (0,0) [label=left:$a$] {};
    \node[vertex] (b) at (1,1) [label=$b$] {};
    \node[vertex] (c) at (0,2) [label=left:$c$] {};
    \node[vertex] (d) at (1,3) [label=$d$] {};
    \node[vertex] (e) at (2,2) [label=right:$e$] {};
    \node[vertex] (f) at (2,0) [label=right:$f$] {};

    \tikzstyle{every node}=[font=\scriptsize,fill=white,inner sep=0.8pt]
    \begin{scope}

    \draw (a) edge node {$1,9$} (b);
    \draw (b) edge node {$2,8$} (c);
    \draw (c) edge node {$3,7$} (d);
    \draw (d) edge node {$6$} (e);
    \draw (e) edge node {$5,10$} (b);
    \draw (b) edge node {$4,11$} (f);
    \end{scope}
  \end{tikzpicture}
  \caption{A temporally connected graph with pivot vertex $(d,7)$ and bidirectional paths between all vertices, but no temporal spanning tree.}
  \label{fig:tct-bp-counter}
\end{figure}
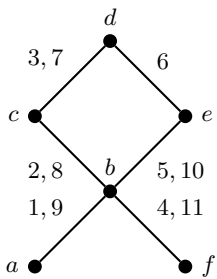
where all pairs can reach each other using
bi-paths, but these paths cannot be mutualized in the form of a
spanning tree. We will use this graph as a gadget in one of our
reductions. Observe that this graph also has a pivot vertex~$(d,7)$.

\subsection{Simple, strict, proper, happy: generality of results}

Temporal graphs can be restricted in many ways. Natural restrictions
include the fact of being \emph{simple}: every edge has a single
presence time ($\lambda$ is single-valued); and the fact of being
\emph{proper}: adjacent edges never have a label in common (that is,
every snapshot is a collection of matchings and isolated vertices).
Temporal graphs that are both simple and proper are called
\emph{happy}.

A convenient fact about properness is that it makes the distinction
between strict and non-strict vanish. Our main negative result
establishes that \TST is NP-complete even in \emph{proper} temporal
graphs, thus this is true in both the strict and non-strict settings.
Observe that \TST corresponds to \BS for the particular case that
$k=n-1$, so we also get by the same result that \BS is
NP-complete. However, we establish that \BS is also NP-complete in \emph{simple} temporal
graphs (where \TST is shown to be tractable). Since the number of
edges and the number of time labels coincide in this setting of temporal
graphs, our result establishes that minimizing the number of labels
instead of the number of edges would also be hard, a benefit of
targeting \emph{simple} temporal graphs. We insist that targeting
well-chosen settings of temporal graphs in the reductions allow one to
obtain more general (and unified) results. In fact, if the
problem were hard in \emph{happy} temporal graphs, then a single
reduction would suffice to produce all the above results. This
is not the case here, as both problems can be solved trivially in happy graphs.
The reader is referred to~\cite{CCS24} for a study of how these
various settings impact the expressivity of temporal graphs in terms
of reachability (as well as~\cite{doering2025} for directed temporal graphs).

To summarize, the tractability landscape of the problems considered in this paper is depicted in Figure~\ref{fig:outline}.

\begin{figure}[!ht]
  \centering
\begin{tikzpicture}[scale=.6]
  \tikzstyle{alwaysno} = [pattern={vertical lines},pattern color=lightgray]
  \tikzstyle{bothpoly} = [pattern={vertical lines},pattern color=lightgray]
  \tikzstyle{bispannerhard} = [pattern={north east lines},pattern color=gray]
  \tikzstyle{bothhard} = [fill=red!20]

  \tikzstyle{every node}=[font=\footnotesize]

  \draw[thick,black] (0,0) circle[radius=3cm];
  \draw[thick,black] (4,0) circle[radius=3cm];
  \node [draw,fit=(current bounding box),inner sep=5mm] (frame){};
  \begin{scope} 
  	\clip (4,0) circle[radius=3cm];
  	\draw [fill=green!20] (0,0) circle[radius=3cm];
  	\draw [alwaysno] (0,0) circle[radius=3cm];
  \end{scope}

  \begin{scope}[even odd rule] 
    \clip (4,0) circle[radius=3cm];
  	\draw [draw=none, bothhard] (4,0) circle[radius=3cm] (0,0) circle[radius=3cm];
  \end{scope}

  \begin{scope}[even odd rule] 
  	\clip (-3,-3) rectangle +(6,3);
    \clip (4,0) circle[radius=3cm]
          (4,0) circle[radius=10cm];
  	\draw [fill=orange!20] (0,0) circle[radius=3cm];
  	\draw [bispannerhard] (0,0) circle[radius=3cm];
  \end{scope}

  \begin{scope}[even odd rule] 
  	\clip (-3,0) rectangle +(6,3);
    \clip (4,0) circle[radius=3cm]
          (4,0) circle[radius=10cm];
  	\draw [fill=green!20] (0,0) circle[radius=3cm];
  	\draw [bothpoly] (0,0) circle[radius=3cm];
      \end{scope}

  \begin{scope}[even odd rule] 
    \clip (0,0) circle[radius=3cm]
          (current bounding box.north west) rectangle (current bounding box.south east);
    \clip (4,0) circle[radius=3cm]
          (current bounding box.north west) rectangle (current bounding box.south east);
  	\draw [draw=none, bothhard] (-4,-4) rectangle +(16,4);
  \end{scope}

  \begin{scope}[even odd rule] 
    \clip (0,0) circle[radius=3cm]
          (current bounding box.north west) rectangle (current bounding box.south east);
    \clip (4,0) circle[radius=3cm]
          (current bounding box.north west) rectangle (current bounding box.south east);
  	\draw [draw=none, bothhard] (-4,0) rectangle +(16,4);
  \end{scope}

  \draw[thick,black] (0,0) circle[radius=3cm];
  \draw[thick,black] (4,0) circle[radius=3cm];

  \node at (0:2cm) {Happy};
  \node at (-.3, 1) {Simple};
  \node at (4.3, 1) {Proper};
  \node at (-2, 3.1) {Strict};
  \node at (-2,-3.2) {Non-strict};

  \draw (-3.7,0) -- (-3,0);
  \draw[dashed] (-3,0) -- (1,0);
  \draw (7,0) -- (7.7,0);

  \tikzstyle{every path}=[thick]
\end{tikzpicture}
  \caption{Outline of the results. (\textit{Plain red:} both \TST and \BS are NP-hard; \textit{Vertical green:} both problems are tractable; and \textit{Slanting orange:} \TST is tractable and \BS is NP-hard. 
  )}
  \label{fig:outline}
\end{figure}

\section{\TC spanning trees}
\label{sec:tst}

In this section, we first show that all \TC trees contain a pivot contact or a pivot vertex.
This property is subsequently used in the reduction from SAT to \TST. We also observe that the problem is trivial to decide in the case of simple temporal graphs.

\subsection{The existence of pivot structures in \TC trees}
We now show that \TC trees always admit a pivot contact or a pivot vertex.
Note that the existence of such a pivot immediately implies that the temporal graph is temporally connected.
\begin{lemma}
Let~$\G$ be a temporal graph. 
If~$\G$ contains a pivot vertex, then~$\G$ is temporally connected.
If~$\G$ contains a strict (non-strict) pivot contact, then~$\G$ is strict (non-strict) temporally connected.
\end{lemma}

Moreover, for temporal trees, the converse also holds.

\begin{theorem}\label{pivot in trees}
Let~$\G$ be a temporal tree on at least two vertices.
If~$\G \in \sTC$, then $\G$ has a pivot contact or a pivot vertex.  
If~$\G \in \TC$, then it has a pivot contact.  
\end{theorem}

To show the statement, we first prove that whenever the footprint of a \TC graph has a bridge (edge whose removal disconnects it), then we can always restrict the number of labels of this edge to at most~$2$ while preserving temporal connectivity. This fact also appears as Lemma~3 in~\cite{inefficiently}, however we need the following specific proof for subsequent use.

\begin{lemma}
  \label{lem:two-labels}
  Let $\G$ be a strict (non-strict) temporally connected graph. If $uv$ is a bridge in the footprint and $|\lambda(uv)| \geq 2$, then one can restrict $\lambda(uv)$ to at most two labels~$\lambda^{\to}(uv)$ and~$\lambda^{\from}(uv)$ while preserving strict (non-strict) temporal connectivity.
\end{lemma}
\begin{proof}
  Let $V_u$ and $V_v$ be the two components after removing~$uv$ from~$\G$. For all $x\in V_u$, let $t({x},v)$ be the earliest time $x$ can reach $v$ via a strict (non-strict) temporal path.
  Note that $t(x,v)\in \lambda(uv)$.
  Now, let $\lambda^{\to}(uv) =\max \{t(x,v) \mid x\in V_u\}$.
By definition, at least one vertex of $V_u$ cannot cross $uv$ earlier than $\lambda^{\to}(uv)$, and yet, this vertex can reach all of $V_v$ since $\G$ is temporally connected. 
Thus, each vertex of~$V_u$ can traverse edge~$uv$ at time~$\lambda^{\to}(uv)$ and still reach each vertex of~$V_v$ afterwards. Symmetrically, one can identify a label $\lambda^{\from}(uv)$ such that all the vertices in $V_v$ can reach $u$ at time $\lambda^{\from}(uv)$, and reach all of $V_u$ subsequently. Thus, $\lambda^{\to}(uv)$ and $\lambda^{\from}(uv)$ are sufficient for preserving reachability between $V_u$ and $V_v$. To conclude, observe that reachability within $V_u$ or within $V_v$ does not need $uv$.
\end{proof}

Based on this, we can now show the following.
\begin{theorem}\label{pivot hereditary bridge}
Let $\G$ be a strict (non-strict) temporally connected graph that has a bridge edge $uv$ with~$\min \lambda(uv) = \lambda^{\to}(uv)\leq\lambda^{\from}(uv) = \max \lambda(uv)$.
Moreover, let~$V_v$ be the connected component of~$v$ after removing edge~$uv$ from~$\G$.
Then, (i)~if $\G[V_v\cup \{u\}]$ has a strict (non-strict) pivot contact, then~$\G$ has a strict (non-strict) pivot contact, and (ii)~if $\G[V_v\cup \{u\}]$ has a pivot vertex, then~$\G$ has a pivot vertex.
\end{theorem}
\begin{proof}
Let~$V_u$ denote the connected component of~$u$ in~$\G-uv$.
Recall that, $\lambda^{\to}(uv)$ is the smallest label of~$uv$ in~$\G$, such that each vertex of~$V_u$ can traverse the edge~$uv$.
Similarly, $\lambda^{\from}(uv)$ is the smallest label of~$uv$ in~$\G$, such that each vertex of~$V_v$ can traverse the edge~$uv$.

Now consider the temporal subgraph~$\G^* := \G[V_v\cup \{u\}]$.
We now distinguish which pivot structure~$\G^*$ contains.

\textbf{Case 1:} $\G^*$ has a pivot contact~$(e,t)$\textbf{.}
This implies that in particular there is a strict (non-strict) temporal path~$P_1$ that starts at vertex~$u$ and traverses the edge~$e$ at time~$t$, and there is a strict (non-strict) temporal path~$P_2$ that traverses the edge~$e$ at time~$t$ and ends in vertex~$u$.
We show that~$(e,t)$ is also a pivot contact of~$\G$.
As~$\lambda^\to(uv) = \min \lambda(uv)$ and $\lambda^{\from}(uv) = \max \lambda(uv)$, the departure time of~$P_1$ is at least~$\lambda^{\to}(uv)$ and the arrival time of~$P_2$ is at most~$\lambda^{\from}(uv)$.
By definition of~$\lambda^{\to}(uv)$, each vertex of~$V_u$ can traverse the edge~$uv$ at that time, which implies that we can extend the path~$P_1$ at the beginning to start from any vertex of~$V_u$ in~$\G$.
We now similarly show that we can extend the path~$P_2$ at the end to also reach any vertex of~$V_u$ in~$\G$.
As~$\G$ is temporally connected, each vertex of~$V_v$ can reach each vertex of~$V_u$.
This holds in particular for any vertex of~$V_v$ for which~$\lambda^{\from}(uv)$ is the earliest time this vertex can traverse the edge~$uv$ in~$\G$.
This implies that for each vertex~$u'$ of~$V_u$, there is a strict (non-strict) temporal path in~$\G$ that traverses the edge~$uv$ at time~$\lambda^{\from}(uv)$ and reaches~$u'$.
Thus, we can extend~$P_2$ at the end to reach any vertex of~$V_u$ in~$\G$.  
This implies that~$(e,t)$ is also a picot contact of~$\G$.

\textbf{Case 2:} $\G^*$ has a pivot vertex~$(w,t)$\textbf{.}
Similarly to the previous case, we can define strict temporal paths~$P_1$ and~$P_2$.
Then, with the same arguments about the labels~$\lambda^{\to}(uv)$ and~$\lambda^{\from}(uv)$ on~$uv$, we can show that~$(w,t)$ is a pivot vertex of~$\G$ too.
\end{proof}

This is of independent interest but in particular allows us to prove the existence of pivot structures in \TC trees.

\begin{proof}[Proof of~\Cref{pivot in trees}]
First, we show the statement for \TC stars.
Let~$\G$ have a star as underlying graph with center vertex~$c$.
Let~$e$ be any edge for which~$\min \lambda(e)$ is maximized, and let~$t:= \min \lambda(e)$.
Moreover, let~$v$ be the endpoint of~$e$ that is distinct from~$c$.
We distinguish two cases of strict and non-strict reachability.

\textbf{Case 1:} We consider non-strict temporal paths\textbf{.}
We will show that~$(e,t)$ is a non-strict pivot contact.
This is due to the following:
Firstly, since for each other vertex~$u\notin \{v,c\}$, we have~$\min \lambda(uc)\leq t$, there is a non-strict temporal path starting at~$u$ that traverses edge~$e$ at time~$t$.
Secondly, since~$\G$ is non-strictly temporally connected, for each vertex~$u\notin \{v,c\}$, there is a temporal path~$P$ from~$v$ to~$u$.
As we are dealing with a star, the first edge of~$P$ is~$e$ and the label of that edge used by~$P$ is at least~$t = \min \lambda(vc)$.
Moreover, as~$e$ is the first edge of that path, there is a temporal path~$P'$ that agrees with~$P$ on everything but possibly the label of the first edge, and the first edge is traversed at time~$t$.
Hence, for each vertex~$u\notin\{v,c\}$, there is a temporal path that traverses edge~$e$ at time~$t$ and arrives at~$u$.
Consequently, $(e,t)$ is a pivot contact.

\textbf{Case 2:} We consider strict temporal paths\textbf{.}
We will show that~$(e,t)$ is a strict pivot contact or that~$(c,t+1)$ is a strict pivot vertex.
Assume that~$(e,t)$ is not a strict pivot contact, as otherwise, we are done.
Consider the arguments used in the previous case. 
The arguments for the existence of the temporal paths that traverse~$e$ at time~$t$ and reach any vertex~$u$ still holds in the strict case.
Hence, the only thing that could prevent~$(e,t)$ from being a strict pivot contact is the possibility that there is some vertex~$u\notin\{v,c\}$ such that there is no strict temporal path starting at~$u$ that traverses edge~$e$ at time~$t$.
Since we are dealing with a star, that is, since~$u$ is a neighbor of~$c$, this implies that~$\min \lambda(uc) \geq t = \min \lambda(vc)$.
By choice of~$e=vc$, we have~$t \geq \min \lambda(uc)$, which gives us~$\min \lambda(uc) = t$.
Due to the temporal connectivity of~$\G$, this implies that for each vertex~$w\neq c$, the edge~$wc$ has a label which is at least~$t+1$.
Based on this fact, we now show that~$(c,t+1)$ is a pivot vertex.
First, by choice of~$t$ and the fact that~$\G$ has a star as underlying graph, we immediately get that for each vertex~$w$ of~$\G$,  there is a temporal path from~$w$ that arrives at~$c$ with an edge of label at most~$t < t+1$.
It thus remains to show that there is also a temporal path starting at time at least~$t+1$ from~$c$.
As already shown, for each vertex~$w\neq c$, the edge~$wc$ has a label which is at least~$t+1$.
Hence, also such temporal paths exists, which implies that~$(c,t+1)$ is a pivot vertex.

Thus, the statement holds for stars.

Next, we show that the statement holds for all \TC trees via induction on the number~$n$ of vertices. 

For the base case, consider~$n\in \{2,3\}$.
For~$n=2$, the statement trivially holds, as each temporal edge of the graph connects all vertices and thus is a pivot contact.
For~$n=3$, the statement holds, as each tree on three vertices is a star.
This completes the base case.

For the inductive step, let~$\G$ be a temporal tree on~$n\geq 4$ vertices that is strictly (non-strictly) temporally connected and assume that the statement holds for all temporal trees that are strictly (non-strictly) temporally connected on at most~$n-1$ vertices.
Let~$G$ be the underlying graph of~$\G$.
If~$G$ is a star, the statement holds by the initial proof.
So assume that~$G$ is not a star.
Then, there is some edge~$uv$ in~$G$, such that~$G - \{uv\}$ consists of two trees~$T_u$ and~$T_v$ with at least two vertices each, such that~$u$ is in~$T_u$ and~$v$ is in~$T_v$.
By Lemma~\ref{lem:two-labels}, we can restrict the labels on~$uv$ to~$\{\lambda^{\to}(uv),\lambda^{\from}(uv)\}$ while preserving that the resulting temporal tree~$\G'$ is strictly (non-strictly) temporally connected.
Here, $\lambda^{\to}(uv)$ is the smallest label of~$uv$ in~$\G$, such that each vertex of~$T_u$ can traverse the edge~$uv$.
Similarly, $\lambda^{\from}(uv)$ is the smallest label of~$uv$ in~$\G$, such that each vertex of~$T_v$ can traverse the edge~$uv$.
Assume without loss of generality that~$\lambda^{\to}(uv) \leq \lambda^{\from}(uv)$
As~$\G'$ is strict (non-strict) temporally connected, so is~$\G^*$.
Moreover, $\G^*$ contains strictly less than~$n$ vertices, as~$T_u$ has at least two vertices, and we only kept the vertex~$u$ of that subtree.
By the induction hypothesis, this implies that~$\G^*$ contains either
\begin{itemize}
\item a pivot contact or a pivot vertex, if~$\G^*$ is strictly temporally connected
\item a pivot contact, if~$\G^*$ is non-strictly temporally connected.
\end{itemize}
Hence, the conditions of~\Cref{pivot hereditary bridge} hold for bride edge~$uv$, which implies that~$\G'$ (and thus also~$\G$) contains the respective pivot structures too.
\end{proof}

\subsection{NP-completeness}

The fact that \TST is in NP is straightforward: given an input graph $\G$ and a candidate subgraph $\G'$, it is easy to verify that $\G'$ is a temporal spanner of~$\G$ and the footprint of $\G'$ is a tree.
The exact complexity of these steps may depend on the data structure used for encoding the temporal graph. However, for all reasonable data structures (including the ones discussed in Section~\ref{sec:definitions}), each step is clearly polynomial, using for example any of the algorithms from~\cite{BFJ03} for testing temporal connectivity.

 We now prove that \TST is NP-hard in \emph{proper} temporal graphs (with the consequences discussed above), reducing from \textsc{SAT}.
In the reduction we will use the fact that the solution must have a pivot.

\begin{theorem}
  \label{th:tst-hard degree}
  \TST is NP-hard in proper temporal graphs, where the underlying graph has a maximum degree of~$5$, and even if every edge has at most 2 time labels.
Even under these restrictions, \TST cannot be solved in $2^{o(n+m)}$~time without violating the  ETH.
\end{theorem}
\begin{proof}
  Let $\phi$ be a CNF formula with $n$ variables and $m$ clauses, we will transform $\phi$ into a proper temporal graph $\G$ such that $\G$ admits a \TC spanning tree if and only if $\phi$ is satisfiable. 
  Without loss of generality, we assume that none of the clauses contain both a variable and its negation. 
To obtain the desired degree bound on the the underlying graph, we further assume that each clause contains at most~$3$ literals and that each variable (summed up over the positive and negative literals) occurs in exactly 3 clauses.
  For simplicity, we will use both negative and positive time labels, and chose some time labels as rational numbers. 
  A suitable renormalization can be applied easily to convert this graph into a graph with time labels from~$\mathbb{N}$ while preserving the relative order of the labels.

  The graph $\G$ has for each variable~$x_i$ the vertices $x_i$, $F_i$, and~$T_i$, a vertex $c_j$ for each clause~$c_j$, and two special vertices (thus, $3n+m+2$ vertices in total). The special vertices are $B$ and~$B'$. 
   Intuitively, we will ensure that the edge~$BB'$ will be the only pivot in $\G$, and thus also a pivot in each \TC spanning tree of~$\G$. 
   The edge of the footprint are built as follows. 
   There is an edge between $B$ and $T_1$, and an edge between $B$ and $F_1$, an edge between~$B$ and~$B'$.
   Further, there is an edge between~$x_1$ and~$T_1$ and an edge between~$x_1$ and~$F_1$. 
   Then, for each variable $x_i$ with~$i > 1$, there is an edge between $T_i$ and $x_i$, and an edge between $F_i$ and $x_i$, an edge between~$T_i$ and~$T_{i-1}$, and an edge between~$F_i$ and~$F_{i-1}$. 
   Finally, for each clause $c_j$ where $x_i$ appears (positively or negatively), there is an edge between $x_i$ and $c_j$.
  Parts of the footprint of $\G$ are illustrated in Figure~\Cref{fig:new hardness}.
  
  The time labels are as follows (see also Figure~\Cref{fig:new hardness}).
  Edge~$BB'$ is given the single label~$0$.
  Every other edge is given two time labels, one positive (defined relative to a reference value $t^+ = n+k+1$), and one negative (defined relative to $t^- = -t^+$). 
  Let~$\epsilon = \frac{1}{n+1}$.
  The time labels on the path~$P_F:=(B,F_1, \dots, F_n)$ are defined, such that (i)~starting at vertex~$F_n$ at time~$-n \cdot \epsilon = -\frac{n}{n+1}$, one can reach vertex~$B$ at time~$-\epsilon$ while traversing only edges of~$P_F$ and (ii)~starting at vertex~$B$ at time~$t^++\epsilon$, one can reach vertex~$F_n$ at time~$t^+ + n\cdot\epsilon$ while traversing only edges of~$P_F$.
Formally, for each~$i\in [1,n]$, the edge~$F_iF_{i-1}$ receives the labels~$-i\cdot \epsilon$ and~$t^+ +  i\cdot \epsilon$, where for simplicity, $F_0 := B$.
  Analogously, the time labels on the path~$P_T:=(B,T_1, \dots, T_n)$ are defined, such that (i)~starting at vertex~$T_n$ at time~$t^-+\epsilon$, one can reach vertex~$B$ at time~$t^- + n\cdot \epsilon < t^-+1$ while traversing only edges of~$P_T$ and (ii)~starting at vertex~$B$ at time~$\epsilon$, one can reach vertex~$T_n$ at time~$n\cdot\epsilon$ while traversing only edges of~$P_T$.
Formally, for each~$i\in [1,n]$, the edge~$T_iT_{i-1}$ receives the labels~$t^-+(n-i+1)\cdot \epsilon$ and~$ i\cdot \epsilon$, where for simplicity, $T_0 := B$.
   For each variable $x_i$, the labels on $T_i x_i$ are $t^- - i$ and $i$; the ones on $F_i x_i$ are $-i$ and $t^+ + i$. 
   Finally, for each clause $c_j$ where $x_i$ appears, we assign time labels to the edge $x_ic_j$ depending on whether $x_i$ appears positively or negatively in $c_j$, namely:
  \begin{itemize}
    \item If $x_i \in c_j$, the labels are $t^- - (i + j)$ and $(i + j)$. In this case, $x_i c_j$ is called a \emph{positive edge}.
    \item If $\neg x_i \in c_j$, the labels are $- (i + j)$ and $t^+ + (i + j)$. In this case, $x_i c_j$ is called a \emph{negative edge}.
  \end{itemize}

    \begin{figure}[htb]
          \centering
          \begin{tikzpicture}[every loop/.style={}, thick, node distance=1.2cm,auto,xscale = 2,yscale=1.2]
              \tikzset{vertex/.style={draw, circle, inner sep=0.55mm,fill=black}}
              \tikzstyle{tx} = [line width=2pt,black!50!green]
              \tikzstyle{positive} = [dashdotted, line width=2pt,black!50!green]
              \tikzstyle{fx} = [line width=2pt,red]
              \tikzstyle{negative} = [dotted, line width=2pt,red]

              \node[vertex] (a) at (1,-3) [label=left:$c_j$] {};
              \node[vertex] (b) at (1,-1.5) [label=$x_i$] {};
              \node[vertex] (c) at (0,2) [label=left:$T_1$] {};
              \node[vertex] (d) at (1,3) [label=below:$B$] {};
              \node[vertex] (d2) at (1,4) [label=above:$B'$] {};
              \node[vertex] (e) at (2,2) [label=right:$F_1$] {};
              
              \node[vertex] (c2) at (0,-.5) [label=left:$T_i$] {};              
              \node[vertex] (e2) at (2,-.5) [label=right:$F_i$] {};
              \node[vertex] (c3) at (0,.5) [label=left:$T_{i-1}$] {};              
              \node[vertex] (e3) at (2,.5) [label=right:$F_{i-1}$] {};

              \begin{scope}

              \draw[thick, dotted] (c2) edge node[text=black] {} (c);
              \draw[thick, dotted] (e2) edge node[text=black] {} (e);
              \draw[thick, dotted] (c2) edge node[text=black] {} ($(c2) + (0,-1.3)$);
              \draw[thick, dotted] (e2) edge node[text=black] {} ($(e2) + (0,-1.3)$);
              
                \tikzstyle{every node}=[black,font=\footnotesize,inner sep=1pt]
              \draw (a) edge node[right] (acc) {$\begin{cases}
              t^--(i+j),(i+j)& \textrm{if~} x_i \in c_j\\ -(i+j),t^++(i+j)& \textrm{if~} \neg x_i \in c_j
              \end{cases}$} (b);
              \draw (b) edge[tx] node[text=black] {$t^--i,i$} (c2);
              \draw (c) edge node {$t^-+n\cdot \epsilon,\epsilon$} (d);
              \draw (d) edge node {$-\epsilon,t^++\epsilon$} (e);
              \draw (d2) edge node {$~0$} (d);
              \draw (c2) edge node {$t^-+(n-i+1)\cdot \epsilon,i\cdot \epsilon~$} (c3);
              \draw (e3) edge node {$~-i\cdot \epsilon,t^++i\cdot \epsilon$} (e2);
              \draw (e2) edge[fx] node[text=black] {$-i,t^++i$} (b);
              \end{scope}
          \end{tikzpicture}
      \label{fig:new hardness}
      \caption{An illustration of the reduction from SAT to \TST. Here, clause clause~$c_j$ contains a literal of variable~$x_i$, and the edge~$c_jx_i$ receives labels according to which literal of~$x_i$ is in~$c_j$.
      If~$x_i$ occurs positively in~$c_j$, then to preserve temporal paths between~$c_j$ and~$B'$, it suffices to keep the green edge.
      Otherwise, it suffices to keep the red edge.
      Intuitively, temporal paths with purely negative labels go around the illustration in clockwise order, whereas temporal paths with purely positive labels go around the illustration in counter-clockwise order.}
  \end{figure}

  The idea of the construction is that a spanning tree is equivalent to a valuation that satisfies $\phi$.
   A spanning tree will only allow for each variable~$x_i$ to keep at most one of the edges $T_ix_i$ or~$F_ix_i$.
   
   We show the first direction of the proof.
   \begin{nclaim}
    \label{claim:sat-tct}
    If $\phi$ is satisfiable, then $\G$ admits a \TC spanning tree.
  \end{nclaim}
  \begin{proof}
    Consider a satisfying assignment of $\phi$. 
    We describe which edges to take into the \TC spanning tree~$\G'$.
    We take the edge~$BB'$ and all edges of the paths~$(B,T_1,\dots,T_n)$ and~$(B,F_1,\dots,F_n)$.
    For each variable $x_i$, if $x_i$ is set to \texttt{true}, we take the edge~$T_ix_i$. 
    If $x$ is set to \texttt{false}, we take instead the edge~$F_ix_i$.
    Finally for every clause $c$, pick one variable $x$ which satisfies clause~$c$ under this truth assignment.
    Such a variable exists, as the truth assignment satisfies each clause. 
    
    Note that the resulting temporal graph~$\G'$ has indeed a tree as underlying graph.
    Moreover, note that some variables may have no clause vertices as neighbors in~$\G'$.
    
It thus remains to show that~$\G'$ is temporally connected.
      In order to prove that, we will prove that $BB'$ is a pivot at time $0$; in other words,
      we need to show that every vertex $u\neq B'$ can reach $B$ at a time smaller than~$0$ and can be reached by $B$ starting at time larger than $0$. 
      
      \textbf{Case 1.}      If $u$ is a clause vertex $c_j$, we let~$x_i$ be the unique neighbor of~$c_j$ in~$\G'$.
      We distinguish whether~$T_ix_i$ is an edge of~$\G'$ or~$F_ix_i$ is an edge of~$\G'$.
If~$T_ix_i$ is an edge of~$\G'$, then variable~$x_i$ is set to~\texttt{true}, which implies that~$x_i$ occurs positively in~$c_j$, based on the choice to take the edge~$c_jx_i$ into~$\G'$.
Hence, the labels on~$c_jx_i$ are~$t^--(i+j)$ and~$(i+j)$, and the labels on~$T_ix_i$ are~$t^--i$ and~$i$.
Thus, in~$\G'$, $c_j$ can reach vertex~$T_i$ prior to time~$t^--i$ and~$T_i$ can reach~$c_j$ when starting at a time~$i$.
By definition, the path~$(T_i,\dots,T_1,B)$ is in~$\G'$ and  (i)~allows~$T_i$ starting at a time strictly between~$t^-$ and~$t^- +1 > t^- + n\cdot \epsilon$ to reach~$B$ at a time smaller than~$0$ and (ii)~allows~$B$ starting at a time larger than~$0$ to reach~$T_i$ at a time strictly between~$0$ and~$1>n\cdot \epsilon$.
Hence, this implies that~$c_j$ can reach~$B$ prior to time~$0$ and~$B$ can reach~$c_j$ when starting at a time larger than~$0$.  
The case for the edge~$F_ix_i$ being in~$\G'$ can be shown symmetrically.

      \textbf{Case 2:} If~$u$ is a variable~$x_i$, then the path~$(x_i,F_i,\dots,F_1,B)$ or the path $(x_i,T_i,\dots,T_1,B)$ (depending on which of the two edges incident with~$x_i$ is in~$\G'$) are bi-paths that allow~$x_i$ to reach~$B$ prior to time~$0$ and over which~$B$ can reach~$x_i$ when starting at a time later than~$0$.
      
      \textbf{Case 3:} If~$u$ is a vertex of~$\{F_i,T_i\mid i\in [1,n]\}$, then by definition of the labels on the paths~$(F_n,\dots,F_1,B)$ and~$(T_n,\dots,T_1,B)$, the respective subpath between~$u$ and~$B$ allow~$u$ to reach~$B$ prior to time~$0$ and allow~$B$ to reach~$u$ when starting at a time later than~$0$.
      
      Thus, $BB'$ is a pivot edge of~$\G'$, which implies that $\G'$ is a \TC spanning tree of~$\G$.   \lncsqed   
        \end{proof}

In the remainder of the proof, we show that the inverse is also true.
Assume that there is \TC spanning tree~$\G'$ for~$\G$.
We show that this implies that~$\phi$ is satisfiable.
To this end, we first observe some properties about~$\G'$.
As~$BB'$ is the only edge incident with~$B'$, this edge is part of each temporal spanner of~$\G$.
Moreover, since this edge receives only the single label (namely~$0$), $BB'$ is a pivot edge in~$\G'$.
We exploit this fact as follows:
As each other edge has only one positive and one negative label, for each path~$P=(v_1,v_2,\dots,v_r=B)$ in~$\G'$ the negative labels are strictly increasing and the positive labels are strictly decreasing, as otherwise, $v_1$ cannot reach~$B$ prior to time~$0$ or~$B$ cannot reach~$v_1$ when starting at a time strictly larger than~$0$.
That is, for each~$i\in[1,r-2]$, 
\begin{itemize}
\item $\min \lambda(v_iv_{i+1}) < \min \lambda(v_{i+1}v_{i+2})$ and
\item $\max \lambda(v_iv_{i+1}) > \max \lambda(v_{i+1}v_{i+2})$.
\end{itemize}

   We now show that for each variable~$x_i$, each \TC spanning tree of~$\G$ contains at most one of $T_ix_i$ or~$F_ix_i$.

  \begin{nclaim}
    \label{claim:fixed-cycle}
    The \TC spanning tree~$\G'$ contains all edges of the two paths $(T_n,\dots,T_1,B)$ and~$(F_n,\dots,F_1,B)$.
  \end{nclaim}
  \begin{proof}
  For simplicity, let~$T_0 := F_0:= B$.
  
  Let~$i\in[1,n]$ and assume that the edge~$T_iT_{i-1}$ is not part of~$\G'$.
  We show that there is no path starting at vertex~$B$ at any time larger than~$0$ that reaches~$T_i$ in~$\G'$.
  This then contradicts the fact that~$BB'$ is a pivot edge at time~$0$ and implies that edge $T_iT_{i-1}$ has to be part of~$\G'$.
  Since we only consider paths that start at time steps larger than~$0$, we only need to consider the positive labels on the edges and will argue that there is no temporal $(B,T_i)$-path in the respective temporal graph.
  Such a path does not exist, since (i)~the positive label on~$T_ix_i$ is by definition~$i$ and thus smaller than any other positive label incident with~$x_i$ and (ii)~the positive label of~$T_iT_{i+1}$ is smaller than 1, which is strictly smaller than the smallest positive label incident with any variable vertex~$x_j$. 
    The first part implies that no such path can traverse edge~$T_ix_i$ and the second part implies that no such path can traverse the edge~$T_iT_{i+1}$ (if~$i < n$), as this edge needs to be preceded by some edge incident with a variable vertex.
    As these are the (at most) two neighbors of~$T_i$ besides~$T_{i-1}$ in~$\G$, there is in fact no temporal $(B,T_i)$-path in the respective temporal graph, if~$T_iT_{i-1}$ is not part of~$\G'$.
    Consequently, $T_iT_{i-1}$ is part of~$\G'$.
  
Now assume that the edge~$F_iF_{i-1}$ is not part of~$\G'$.
  We show that there is no path from~$F_i$ that reaches~$B$ prior to time~$0$ in~$\G'$.
  This then contradicts the fact that~$BB'$ is a pivot edge at time~$0$ and implies that edge $F_iF_{i-1}$ has to be part of~$\G'$.
  Since we only consider paths that reach~$B$ prior to time~$0$, we only need to consider the negative labels on the edges and will argue that there is no temporal $(F_i,B)$-path in the respective temporal graph.
  Such a path does not exist, since (i)~the negative label on~$F_ix_i$ is by definition~$-i$ and thus larger than any other negative label incident with~$x_i$ and (ii)~the negative label of~$F_iF_{i+1}$ is larger than -1, which is strictly larger than the largest negative label incident with any variable vertex~$x_j$. 
    The first part implies that no such path can traverse edge~$F_ix_i$ and the second part implies that no such path can traverse the edge~$F_iF_{i+1}$ (if~$i < n$), as this edge needs to be (not necessarily immediately) followed by some edge incident with a variable vertex.
    As these are the (at most) two neighbors of~$F_i$ besides~$F_{i-1}$ in~$\G$, there is in fact no temporal $(F_i,B)$-path in the respective temporal graph, if~$F_iF_{i-1}$ is not part of~$\G'$.
    Consequently, $F_iF_{i-1}$ is part of~$\G'$. \lncsqed
  \end{proof}

  Claim~\ref{claim:fixed-cycle} implies that the \TC spanning tree~$\G'$ contains for each variable~$x_i$, at most one of the edges $T_i x_i$ or $F_i x_i$, as otherwise, there would be a cycle in~$\G'$. 
  Morally, this encodes the choice of an assignment for variable $x_i$ in $\phi$. 
  Based on this property, we now prove that~$\phi$ is satisfiable.
  
  \begin{nclaim}
    \label{claim:tct-sat}
    If $\mathcal{G}$ admits a \TC spanning tree, then $\phi$ is satisfiable.
  \end{nclaim}
  \begin{proof}
  We define a truth assignment as follows:
  For each variable~$x_i$, let~$e_i$ denote the first edge of the unique path from~$x_i$ to~$B$ in~$\G'$.
  We set~$x_i$ to~\texttt{true} if the largest label of~$e_i$ is smaller than~$t^+$.
  Otherwise, we set~$x_i$ to~\texttt{false}.
  Note that by the definition of the labels on edges incident with~$x_i$, this implies:
\begin{itemize}
\item If~$x_i$ is set to~\texttt{true}, $\min \lambda(e_i) < t^-$ and~$\max \lambda(e_i) < t^+$.
\item If~$x_i$ is set to~\texttt{false}, $\min \lambda(e_i) > t^-$ and~$\max \lambda(e_i) > t^+$. 
\end{itemize}  
  We show that this truth assignment satisfies~$\phi$.
  Let~$c_j$ be an arbitrary clause of~$\phi$.
  It suffices to show that~$c_j$ is satisfied by the truth assignment.
  As each neighbor of vertex~$c_j$ in~$\G$ is a variable vertex, there is some variable~$x_i$, such that~$x_i$ is the first internal vertex of the unique path from~$c_j$ to~$B$ in~$\G'$.
Thus, $c_jx_i$ and~$e_i$ are the first two edges of the unique path~$P_j$ from~$c_j$ to~$B$.  
  As discussed initially, this implies that~$\min \lambda(c_jx_i) < \min \lambda(e_i)$ and $\max \lambda(c_jx_i) > \max \lambda(e_i)$.
  Hence, if~$x_i$ is set to~\texttt{true}, then the smaller label on~$c_jx_i$ is smaller than~$\min \lambda(e_i) < t^-$, implying that~$c_jx_i$ is a positive edge, that is, $x_i$ occurs positively in~$c_j$.
  Otherwise, if~$x_i$ is set to~\texttt{false}, then the larger label on~$c_jx_i$ is larger than~$\max \lambda(e_i) > t^-$, implying that~$c_jx_i$ is a negative edge, that is, $x_i$ occurs negatively in~$c_j$.
  In both cases, the truth assignment satisfies clause~$c_j$ via the literal of~$x_i$ that occurs in~$c_j$.
  As a consequence, the defined truth assignment satisfies each clause, which implies that~$\phi$ is satisfiable.\lncsqed
  \end{proof}

  From Claims~\ref{claim:sat-tct} and~\ref{claim:tct-sat}, $\phi$ is
  satisfiable if and only if $\G$ admits a \TC spanning tree.
  Furthermore, $\mathcal{G}$ is proper and each edge has at most $2$ time labels. 
  As the number of vertices and the number of time edges in~$\G$ is linear in the size of $\phi$, an algorithm for~\TST with running time~$2^{o(n+m)}$ would imply an algorithm for~SAT with running time~$2^{o(\phi)}$. 
  The latter would violate the ETH.
  This completes the proof.\lncsqed
\end{proof}

We now argue that we can also transfer the hardness to instances where the underlying graph is bipartite.

\begin{lemma}
\TST is NP-hard in proper temporal graphs, where the underlying graph is bipartite, has a maximum degree of~$5$, and if every edge has at most 2 time labels.
Even under these restrictions, \TST cannot be solved in $2^{o(n+m)}$~time without violating the ETH.
\end{lemma}  
\begin{proof}
Consider the instance~$\G$ of \TST obtained by the previous reduction.
Subdivide each such edge~$X_iX_{i+1}$ (with~$X\in \{T,F\}$) with labels~$\{\ell,h\}$ by a new vertex~$X'_{i+1}$ and assign the labels~$\{\ell+\mu,h-\mu\}$ to~$X_iX'_{i+1}$ and the labels~$\{\ell-\mu,h+\mu\}$ to~$X'_{i+1}X_{i+1}$ for~$\mu = \frac{\epsilon}{2}$.
Let~$\G'$ be the resulting temporal graph. 
As subdivision does not increase the maximum degree, the underlying graph of~$\G'$ has a maximum degree of~$5$.
Moreover, the underlying graph of~$\G'$ is bipartite.
To see this, let~$U:=\{T_i,F_i\mid 0\leq i \leq n\}$ and~$U':= \{B\}\cup \{T_i',F_i'\mid 1\leq i \leq n\}$, and observe that 
\begin{itemize}
\item each clause vertex is only adjacent to variable vertices,
\item each variable vertices is only adjacent to clause vertices and vertices of~$U$,
\item each vertex of~$U$ is only adjacent to variable vertices and vertices of~$U'$,
\item vertices of~$U'$ are only adjacent to vertices of~$U$.
\end{itemize}
That is, the variable vertices together with the vertices of~$U'$ are one side of the bipartition and the vertices of~$U$ together with the clause vertices form the other side.
 
We now argue that~$\G$ and~$\G'$ are equivalent instances.
Recall that in each \TC spanning tree of~$\G$, each edge of~$\{T_iT_{i+1},F_iF_{i+1}\mid 0\leq i \leq n-1\}$ is contained (see Claim~\ref{claim:fixed-cycle}).
Thus, for each edge~$X_iX_{i+1}$ in~$\G$, each \TC spanning tree of~$\G'$ has to contain both edges~$X_iX'_{i+1}$ and~$X'_{i+1}X_{i+1}$.
The remainder of the proof is then completely identical to the one of the previous reduction.
Consequently, the statement follows.\lncsqed
\end{proof}

The following lemma also implies that hardness of the problem also transfers on underlying graphs that are supergraphs of hard instances.

\begin{lemma}
Let~$\G$ be an instance of~\TST with at least three vertices.
Moreover, let~$e$ be an edge that receives no label in~$\G$ and let~$h$ be the highest label assigned to any edge by~$\G$.
Then, adding edge~$e$ to~$\G$ at time~$h+1$ yields an equivalent instance of~\TST.
\end{lemma}
\begin{proof}
Let~$\G^*$ be the resulting instance of~\TST.
We show that in fact, $\G$ and~$\G^*$ share the same \TC spanning trees.
As~$\G$ is a temporal subgraph of~$\G^*$, each \TC spanning tree of~$\G$ is also a \TC spanning tree of~$\G^*$.
Now consider the opposite direction.
We show that each \TC spanning tree~$\G'$ of~$\G^*$ is in fact also a \TC spanning tree of~$\G$.
That is, we show that~$\G'$ does not contain the edge~$e$.
Assume towards a contradiction that this is not the case.
Let~$ab := e$.
Since~$\G'$ is a tree, there are two connected components~$A$ and~$B$ after removing edge~$e$, where~$a\in A$ and~$b\in B$.
As~$\G'$ has at least three vertices, we can assume without loss of generality that~$a$ has at least one neighbor~$c$ in~$A$.
Thus, $ca$ and~$ab$ are adjacent edges in~$\G'$.
By definition, the unique label of~$ab$ is~$h+1$ and the largest label of~$ca$ is at most~$h$.
Since~$\G'$ is a tree, this implies that there is no temporal path from~$b$ to~$c$ in~$\G'$.
This contradicts the assumption that~$\G'$ is a \TC spanning tree, as~$\G'$ is not temporally connected.
As a consequence, each \TC spanning tree of~$\G^*$ uses only edges of~$\G$, and is thus a \TC spanning tree of~$\G$.\lncsqed
\end{proof}

This implies the following by repeatedly adding edges to the temporal graph.

\begin{corollary}
\TST is NP-hard even on proper temporal graphs, where the underlying graph is a clique, and each edge receives at most two labels.
Even under these restrictions, \TST cannot be solved in $2^{o(n)}$~time without violating the ETH.
\end{corollary}

\subsection{Tractability in simple temporal graphs}
\label{sec:simple-tst}

Here, we observe that \TC spanning trees do not exist, at all, in simple temporal graphs in the strict setting. In the non-strict setting, they may or may not exist, and this can be decided efficiently.

\begin{lemma}
  \label{lem:never-exists}
  Simple \TC graphs on at least three vertices in the strict setting do not admit \TC spanning trees.
\end{lemma}
\begin{proof}
  Recall that a \TC spanning tree must offer bidirectional temporal paths (bi-paths) between all pairs of vertices. If the labeling is simple, then every edge has only one presence time, but since the strict setting imposes that the labels increase along a path, this implies that no path of length $2$ or more can be used in both directions.\lncsqed
\end{proof}

\begin{lemma}
  \TST can be solved in polynomial time in simple temporal graphs in the non-strict setting.
\end{lemma}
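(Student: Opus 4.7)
The plan is to reduce the problem to checking, for each time label $t$ appearing in $\G$, whether the snapshot $G_t$ is connected and spans all $n$ vertices. The correctness of this reduction rests on a simple structural observation: in the simple non-strict setting, every temporal spanning tree must be entirely contained in a single snapshot.

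First I would argue that if $\mathcal{H}$ is a temporal spanning tree of $\G$, then all edges of $\mathcal{H}$ carry the same time label. Indeed, since the footprint of $\mathcal{H}$ is a tree, for any two adjacent edges $e_1=uv$ and $e_2=vw$ the unique $u$-to-$w$ path in $\mathcal{H}$ is $e_1,e_2$, and temporal connectivity of $\mathcal{H}$ demands a non-decreasing journey in both directions along this path. Letting $t_i$ denote the unique label of $e_i$ (the graph is simple), we get $t_1 \leq t_2$ (for $u \leadsto w$) and $t_2 \leq t_1$ (for $w \leadsto u$), hence $t_1=t_2$. Propagating this equality along the tree, all edges of $\mathcal{H}$ share a common label $t$. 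Conversely, if a single snapshot $G_t$ admits a (static) spanning tree $T$ on $V$, then $T$ equipped with label $t$ on every edge is trivially temporally connected (every pair $u,v$ can reach each other along the unique tree path at time $t$), and is therefore a temporal spanning tree of $\G$.

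This equivalence yields the algorithm: enumerate every distinct label $t$ appearing in $\lambda$ (at most $|E|$ values in a simple temporal graph), build the static graph $G_t$ of edges carrying label $t$, and test whether $G_t$ is connected on $V$ using any standard linear-time search. Return \texttt{yes} as soon as one such snapshot is spanning and connected, and \texttt{no} otherwise. The total running time is polynomial, namely $O(|E|(n+|E|))$, which proves the lemma. There is no real obstacle here; the only subtlety is to notice the label-collapsing phenomenon forced by bidirectional traversal of each unique tree path, which is what makes the simple non-strict case dramatically easier than the general one handled in Theorem~\ref{th:tst-hard}.
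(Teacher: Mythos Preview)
Your proof is correct and follows essentially the same approach as the paper: both argue that bidirectional reachability along the unique tree path forces all edge labels of a simple temporal spanning tree to coincide, reducing the problem to testing whether some snapshot $G_t$ is connected. The paper states this more tersely (leaning on the preceding lemma's argument), whereas you spell out the adjacent-edge equality and its propagation explicitly, but the underlying idea and the resulting algorithm are identical.
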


\begin{proof}
  By the same arguments as above, no path in the tree may rely on time labels that increase at some point along the path, as this would prevent bidirectionality (the temporal graph being simple). However, in the non-strict case, repetition of the time labels are allowed along a path. Thus, a \TC spanning tree exists if and only if a (standard) spanning tree exists in at least one of the snapshots, which can be tested (and found, if applicable) in polynomial time. \lncsqed
\end{proof}

This completes the tractability landscape of \TST in the considered settings, as summarized by Figure~\ref{fig:outline}.

\section{Finding bi-paths and bi-spanners of unrestricted size}
\label{sec:bs}

In this section, we present a polynomial time algorithm that tests if a given temporal graph admits a bidirectional temporal spanner (a bi-spanner). The algorithm relies on a more basic primitive that finds, for a given node, all the bidirectional temporal paths (bi-paths) between this nodes and other nodes. If this algorithm, executed from each node, always reaches all the other nodes, then the union of the corresponding bi-paths forms a bi-spanner.
Bi-spanners obtained in this way may be larger than needed. However, there exist pathological cases where no smaller bi-spanners exist, making this algorithm worst-case optimal (admittedly, these cases are extreme, which motivates the search for small solutions in general, as discussed in Section~\ref{sec:kbs}).

\subsection{High-level description}

Let $s$ be the source node.
The algorithm consists of updating recursively, for each node $v$, a set of triplets $B_v^s=\{(u_i,a_i, d_i)\}$, where $u_i\in N(v)$ and $a_i, d_i \in [1,\tau]$, such that there exists a bi-path between $s$ and $v$ that
\begin{itemize}
\item starts at~$s$ and arrives at time at most $a_i$ at $v$ through its neighbor $u_i$ and
\item arrives at~$s$ after it departs from $v$ through $u_i$ at time at least $d_i$.
\end{itemize}
Note that $a_i$ and $d_i$ do not need to satisfy a particular relative order, as both direction of the bi-path are independent in time. The triplets are updated according to the following extension rule (see also Figure~\ref{fig:extension}).

\begin{erule}[Extension rule]
  Let $(u_i,a_i, d_i)$ be a triplet at node $v$, let $w$ be a neighbor of $v\ne u_i$. Let $a_i'=\min \{t\in \lambda(vw) \mid t \ge a_i\}$ and $d_i'=\max \{t\in \lambda(vw) \mid t \le d_i\}$, or $\bot$ if the resulting set is empty. If $a_i'\ne \bot$ and $d_i'\ne \bot$, then $(v,a_i',d_i')$ is a triplet at node~$w$.
\end{erule}

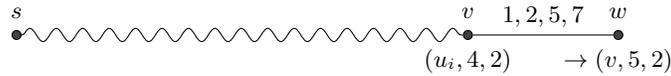
\begin{figure}[h]
  \centering
  \begin{tikzpicture}[xscale=2]
  \tikzstyle{every node}=[draw,circle,black,fill=darkgray,inner sep=1.2pt]
  \path (0,0) node (u){};
  \path (3,0) node (v){};
  \path (4,0) node (w){};
  \tikzstyle{every node}=[font=\small,above=3pt]
  \path (u) node {$s$};
  \path (v) node {$v$};
  \path (w) node {$w$};

  \draw[decorate,decoration=snake] (u) -- (v);
  \tikzstyle{every node}=[font=\footnotesize]
  \draw (v)-- node[above]{$1,2,5,7$} (w);
  \path (v) node[below=2pt] (v2){$(u_i,4,2)$};
  \path (w) node[below=2pt] (v2){$\to (v,5,2)$};
\end{tikzpicture}
\caption{\label{fig:extension} Example of the extension of a bi-path.}
\end{figure}

\begin{lemma}
  \label{lem:extension}
  If the triplet $(u_i,a_i,d_i)$ corresponds to a valid bi-path between $s$ and $v$, and a triplet $(v,a_i',d_i')$ is added by the extension rule to a neighbor $w$ of $v$, then $(v,a_i',d_i')$ corresponds to a valid bi-path between $s$ and $w$.
\end{lemma}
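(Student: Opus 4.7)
The plan is to unpack the semantics of the triplet $(u_i, a_i, d_i)$ at $v$, append the edge $vw$ to both directions of the underlying bi-path, and verify temporal monotonicity separately at the junction $v$ for the two directions. The asymmetric formulas $a_i' = \min\{t \in \lambda(vw) : t \ge a_i\}$ and $d_i' = \max\{t \in \lambda(vw) : t \le d_i\}$ are tailored precisely so that both directions remain valid, so the argument essentially amounts to checking that the inequalities compose correctly at the new junction.

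By hypothesis, there is a footprint path $P$ from $s$ to $v$ whose last edge is $u_i v$, together with a forward temporal path along $P$ arriving at $v$ at some time $t^+ \le a_i$ and a backward temporal path along the reverse of $P$ departing from $v$ at some time $t^- \ge d_i$. I would append $vw$ to $P$; since $w \ne u_i$ by the precondition of the extension rule, the new edge is distinct from $u_i v$, and the resulting footprint sequence still ends at $w$ as required.

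For the forward direction, I concatenate the existing forward temporal path with the single contact $(vw, a_i')$. Because $a_i' \ge a_i \ge t^+$ and $a_i' \in \lambda(vw)$, the extended sequence of contact times is non-decreasing, so it is a valid temporal path from $s$ to $w$ arriving at time $a_i'$, matching the required arrival bound. For the backward direction, I prepend the contact $(wv, d_i')$ to the existing backward temporal path. Since $d_i' \le d_i \le t^-$ and $d_i' \in \lambda(vw)$, the new first contact occurs no later than the first contact $(vu_i, t^-)$ of the original backward path, so the concatenation is again non-decreasing and constitutes a valid temporal path from $w$ to $s$ that departs from $w$ at time $d_i'$, matching the required departure bound.

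The two extended journeys traverse the augmented footprint path $P$-then-$vw$ in opposite directions, so by definition they form a bi-path between $s$ and $w$ witnessed by the triplet $(v, a_i', d_i')$. The only subtlety is keeping the asymmetric roles of $a_i'$ (smallest valid label at least $a_i$) and $d_i'$ (largest valid label at most $d_i$) straight on each side of $v$; since the extension rule bakes this asymmetry into its definition, no substantive obstacle remains.
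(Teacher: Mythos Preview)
Your proposal is correct and follows essentially the same approach as the paper: extend the forward journey of the existing bi-path by the contact $(vw,a_i')$ using $a_i'\ge a_i$, and prepend the backward journey with $(wv,d_i')$ using $d_i'\le d_i$. Your write-up is simply more explicit than the paper's, which compresses the same inequalities into two sentences.
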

\begin{proof}
  The existing triplet at $v$ implies that a bi-path exists from $s$ to $v$ arriving at $v$ before time $a_i$ (included). If $a_i'\ne \bot$, then the extension rule guaranties that $a_i' \ge a_i$ and $a_i' \in \lambda(vw)$, thus the corresponding journey from $s$ to $v$ in this bi-path can be extended to $w$ through edge $vw$. Likewise, if $d_i' \ne \bot$, then $d_i' \le d_i$ and $d_i'\in \lambda(vw)$, so the journey from $v$ to $s$ can be preceded by a 1-hop journey from $w$ to $v$. \lncsqed
\end{proof}

\begin{remark}
  It is sufficient to replace $\ge$ with $>$ and $\le$ with $<$ in the extension rule for considering strict journeys instead of non-strict journeys. 
\end{remark}

Clearly, some triplets are strictly better than others. Namely, if a bi-path from the source arrives at an earliest time and departs back to it at a later time compared to second bi-path \emph{through the same neighbor}, then all journeys using the corresponding times could be replaced by solution using the times of the first bi-path, plus extra waiting at the neighbor. Thus, it is useless to maintain the second triplet in the set. This is formalized through the following
rule:

\begin{erule}[Elimination rule]
  Let $(u_i,a_i, d_i)$ and $(u_i',a_i', d_i')$, with $u_i=u'_i$, be two triplets at node $v$, if $a_i \leq a_i'$ and $d_i \geq d_i'$, then $(u_i,a_i',d_i')$ is removed from $B_v^s$.
\end{erule}

In the algorithm, we denote by $+$ the operation that consists of adding a new triplet to a set of triplets while taking into account this elimination rule. Then, we say that a node $v$ has been \emph{improved} by the addition of a triplet $b$ if $B_v^s + b \ne B_v^s$. Observe that, despite the elimination rule, the set of triplets at a node may contain several triplets involving the same neighbor, if they are incomparable in time. Similarly, several triplets may co-exist with equal times if they involve different neighbors.

\subsection{The algorithm}

The algorithm is quite compact, though its correctness and complexity are not immediate due to the fact that an edge (and even a contact on that edge) may be involved several times in the improvement of its endpoints.
Initially, $B_v^s=\emptyset$ for all $v\ne s$ and $B_s^s=\{((\bot,-\infty,\infty))\}$. The algorithm starts with calling \texttt{improveNeighbors($s$)}. Then, it improves the neighbors recursively until all the bi-paths from $s$ have been computed.

\SetKwComment{Comment}{/* }{ */}
\begin{algorithm}[h]
\caption{Procedure \texttt{improveNeighbors()} that computes all the triplets from a source.}\label{alg:bidirectional}
\textbf{Input}: current node $u$\\
  \ForAll{$($-$, a_i, d_i) \in B_u^s$}{
    \ForAll{$v \in N(u)$}{
      $a_i' \gets \min \{t\in \lambda(uv) \mid t \ge a_i\}$\\
      $d_i' \gets \max \{t\in \lambda(uv) \mid t \le d_i\}$\\
      \If {$B_v^s + (u, a_i', d_i') \ne B_v^s$}{
        $B_v^s \gets B_v^s + (u, a_i', d_i')$\\
        \textup{\texttt{improveNeighbors($v$)}}
      }
    }
  }
\end{algorithm}

\begin{theorem}
  Algorithm~\ref{alg:bidirectional} computes all bi-paths between $s$ and the other vertices.
\end{theorem}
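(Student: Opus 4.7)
The plan is to prove three facts: soundness (every triplet in a $B_v$ corresponds to a valid bi-path), termination (the recursive updates stabilize in polynomially many steps, so the final state is well-defined), and completeness (every bi-path from $s$ to $v$ is dominated by some triplet in the final $B_v$). Soundness follows by induction on the order in which triplets are inserted. The base case is the initialization $B_s=\{(\bot,-\infty,\infty)\}$, which encodes the trivial bi-path at $s$; the inductive step is immediate from Lemma~\ref{lem:extension}, since any triplet added on line~6 is obtained by a direct application of the extension rule from a previously inserted triplet. The elimination rule poses no threat to soundness, as it only discards triplets that are strictly dominated.

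For termination, I would use a potential argument. For each ordered pair $(v,u)$ with $u \in N(v)$, the triplets of the form $(u,a,d) \in B_v$ form an antichain in the product order on $\lambda(uv)\times\lambda(uv)$, so at any moment their number is at most $|\lambda(uv)|$. A successful insertion either enlarges such an antichain or strictly improves its Pareto frontier (and by transitivity of domination, once a triplet has been dominated it remains so). Hence the total number of successful insertions, and therefore of recursive calls, is bounded by $\sum_v \sum_{u\in N(v)} |\lambda(uv)|^2$, which is polynomial in the input size.

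Completeness is the most delicate part; I would prove it by induction on the length $k$ of the underlying path of a bi-path $P$ from $s$ to $v$. The case $k=0$ is handled directly by $B_s$. For $k \ge 1$, let $w$ be the penultimate vertex of $P$ and let $a,d$ be the contact times used by $P$ on the edge $wv$. The prefix of $P$ stopping at $w$ is a bi-path from $s$ to $w$ through some neighbor $u'$, so by induction the final $B_w$ contains a triplet $(u',a^*,d^*)$ with $a^* \le a$ and $d^* \ge d$. When this triplet was first inserted, \texttt{improveNeighbors}($w$) extended it along $wv$, and the quantities $a_i'=\min\{t\in\lambda(wv)\mid t\ge a^*\}$ and $d_i'=\max\{t\in\lambda(wv)\mid t\le d^*\}$ both satisfy $a_i' \le a$ and $d_i' \ge d$, since the contacts used by $P$ on $wv$ are valid witnesses of the min and the max. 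Hence a triplet dominating $P$'s situation at $v$ either entered $B_v$ at that moment or was already present; by the Pareto-monotonicity maintained by the elimination rule, the final $B_v$ still contains a triplet dominating $P$.

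The main obstacle is the interaction between the two inductions: a triplet $(u',a^*,d^*)\in B_w$ invoked as an inductive hypothesis may itself be the result of several rounds of dominance replacement, so the argument must refer to the \emph{final} states of the sets rather than track a single insertion event. The Pareto-monotonicity enforced by the elimination rule is precisely what allows the two inductions to compose cleanly, and also what ensures that the output depends only on the input temporal graph and the source, independently of the order of recursive calls.
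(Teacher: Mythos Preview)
Your proof follows the same two-part structure as the paper: soundness by induction on the order of insertions (invoking Lemma~\ref{lem:extension}), and completeness by induction on the length of the bi-path, showing that each bi-path is dominated by some triplet in the final $B_v$. The paper defers termination to a separate running-time theorem and is slightly less explicit than you are about Pareto-monotonicity under the elimination rule, but the arguments are otherwise essentially identical.
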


\begin{proof}
  First, let us prove by induction that at any step, for any vertex $v$, the triplets in $B_v^s$ indeed correspond to bi-paths between $s$ and $v$.

  At the beginning of the algorithm, all sets are empty except for $B_s^s$ that contains $\{((\bot,-\infty,\infty))\}$, since $s$ has a bi-path of length~$0$ with itself that can start and end at any time, this is a valid triplet.

  Adding triplets to $B_v^s$ is done through the $+$ operator that encapsulates the extension and elimination rules. Since the extension rule guarantees that the extended bi-path is valid if the previous one is (Lemma~\ref{lem:extension}), this means that $B_v^s$ contains only valid bi-paths.

  Assume now that there exists a bi-path $b_{s,v}$ between~$s$ and~$v$ that was not computed by such algorithm, such a bi-path can be seen as a sequence of triplets 
  $$((\bot,-\infty,\infty),(u_1,a_1,d_1),\dots,(u_k,a_k,d_k))$$
  such that $a_i$ is the arrival time at $u_i$ of journey $j_1$ starting at~$s$ and $d_i$ is the departure times from $u_i$ of journey~$j_2$ going to~$s$.
  
  By induction on the length of such bi-path:
  If $b_{s,v}$ has length $k=0$, then it is computed by the algorithm since $\{((\bot,-\infty,\infty))\}$ is part of $B_s^s$.
  Suppose that $b_{s,v}$ has now length $k>0$ and up to $(u_{k-1},a_{k-1},d_{k-1})$, it is part of a bi-path computed by the algorithm, that is, there is a triplet $(x,a_{k-1}',d_{k-1}')$ (where $x$ is either $u_{k-2}$ or $\bot$) in $B_{u_{k-1}}^s$ such that $a_{k-1} \geq a_{k-1}'$ and $d_{k-1} \leq d_{k-1}'$.
  Such triplet, by the time it was added to $B_{u_{k}}^s$, would imply a call to \texttt{improveNeighbor($u_{k}$)}, and by the extension rule, a triplet $(u_{k},a,d)$ in $B_v^s$ such that
  $a = \min \{t\in \lambda(u_{k}v) \mid t \ge a_{k-1}'\}$ and
  $d = \max \{t\in \lambda(u_{k}v) \mid t \le d_{k-1}'\}$.

  Since $a$ is the smallest $t \ge a_{k-1}'$ in $\lambda(u_{k-1}v)$ and $a_{k-1} \geq a_{k-1}'$, this means that the label $a$ is either also the smallest label which is larger or equal to~$a_{k-1}$ or is a label even smaller than such label.
  Since $a_v \in \lambda(u_{k-1}v)$, that is  at  least~$a_{k-1}$ (otherwise $j_1$ would not be a journey), we have that $a \leq a_v$. By the similar argument, we also have $d \geq d_v$. Thus $(u_{k},a_v,d_v)$ is included into a triplet of $B_v^s$, and the bi-path $b_{s,v}$ is computed by the algorithm, a contradiction.
  \lncsqed
\end{proof}

\begin{theorem}
  Algorithm~\ref{alg:bidirectional} runs in polynomial time in the size of the input, namely in time $O(m \Delta^2 \tau^4)$.
\end{theorem}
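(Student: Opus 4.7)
The plan is to express the running time as a product of three polynomially bounded quantities: (a) the maximum size of any set $B_v$ during the execution, (b) the total number of recursive calls to \texttt{improveNeighbors}, and (c) the work performed inside a single call. For (a), I would invoke the elimination rule: for any fixed vertex $v$ and neighbor $u \in N(v)$, the triplets of $B_v$ with first coordinate $u$ form an antichain under the order $(a,d) \preceq (a',d') \iff a \le a' \text{ and } d \ge d'$. Since the extension rule places $a'$ and $d'$ in $\lambda(uv)$, this antichain has size at most $|\lambda(uv)| \le \tau$; summing over the at most $\Delta$ neighbors yields $|B_v| = O(\Delta \tau)$ at all times.

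For (b), each recursive call is triggered by a strictly improving application of the $+$ operator at some vertex, so it suffices to bound the total number of such successful additions. The sub-claim I would prove is a monotonicity property: once a triplet $(u,a,d)$ has been added to $B_v$ and later eliminated by a dominator $(u,a^*,d^*)$ with $a^* \le a$ and $d^* \ge d$, any triplet with first coordinate $u$ that ever appears in $B_v$ afterwards also dominates $(u,a,d)$ (by transitivity, since any new dominator only further improves along both axes); consequently $(u,a,d)$ can never be re-added via $+$. Hence each distinct triplet is added at most once, and since its $a,d$ values must lie in $\lambda(uv)$, there are at most $|\lambda(uv)|^2 \le \tau^2$ candidates per (vertex, neighbor) pair. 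Summing, $\sum_v \sum_{u \in N(v)} |\lambda(uv)|^2 \le 2m\tau^2$, giving $O(m\tau^2)$ calls in total.

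For (c), one call to \texttt{improveNeighbors($u$)} iterates through $|B_u| = O(\Delta \tau)$ triplets and for each scans the $O(\Delta)$ neighbors of $u$. For every (triplet, neighbor) pair it computes $a_i'$ and $d_i'$ in $O(\tau)$ using the data-structure assumption from Section~\ref{sec:definitions}, then tests whether the candidate improves $B_v$. The key observation is that this test only needs to compare against triplets of $B_v$ with first coordinate equal to $u$, of which there are $O(\tau)$ by (a); triplets with other first coordinates are incomparable under the elimination rule and can be ignored, saving a factor of $\Delta$. Per-call work is therefore $O(\Delta\tau \cdot \Delta \cdot \tau) = O(\Delta^2 \tau^2)$, and multiplying by the $O(m\tau^2)$ calls of (b) gives the claimed $O(m\Delta^2\tau^4)$ bound.

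The step I expect to be trickiest is (b): a naive bound would stop at the Pareto-front size $O(\tau)$ per (vertex, neighbor) pair, but \emph{a priori} a triplet could be repeatedly added, eliminated, and re-added via chains of non-optimal intermediate triplets. The monotonicity sub-claim is what rules this out and keeps the total count polynomial; everything else is a routine multiplication.
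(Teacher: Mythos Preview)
Your proposal is correct and follows essentially the same three-factor decomposition as the paper's proof: $|B_v|\le |N(v)|\cdot\tau$, at most $|N(v)|\cdot\tau^2$ improvements per vertex (hence $O(m\tau^2)$ calls), and $O(\Delta^2\tau^2)$ work per call. Your monotonicity sub-claim in part (b) actually supplies a justification that the paper leaves implicit when it asserts ``$B_u$ can be improved at most $|N(u)|\cdot\tau^2$ times,'' so if anything your argument is slightly more rigorous on that point.
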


\begin{proof}
  \label{cor:polynomial}
  By the elimination rule, for each vertex $u$, $B_u^s$ contains less than $|N(u)| \cdot \tau$ triplets since given a neighbor and an arrival time, at most one departure time should be kept in $B_u^s$.
  Moreover, by the same rule, $B_u^s$ can be improved at most $|N(u)| \cdot \tau^2$ times.

  This means that the algorithm will be called at most $m \cdot \tau^2$ times.
  Each call of the function costs $|B(u)| \cdot |N(u)| \cdot O(\tau)$~time where $O(\tau)$ upper bounds the cost of finding the minimum or maximum label in the given range, plus the cost of applying the elimination rule and comparing it, meaning that a call costs $O(|N(u)|^2 \cdot \tau^2)$~time overall. Thus, the total running time is $O(m \Delta^2 \tau^4)$~time. \lncsqed
\end{proof}

To then solve~\BC, it suffices to consider each vertex as source~$s$ and check whether there is a bi-path between~$s$ and any other vertex of the graph.
We conclude the following.

\begin{theorem}\label{bs poly}
One can decide in polynomial time whether a given temporal graph contains a bi-spanner, that is, \BC can be solved in polynomial time.
\end{theorem}

\section{Bi-spanners of small size}
\label{sec:kbs}

In this section, we discuss several aspects of the size of bidirectional spanners. We start with a negative structural result regarding the minimum size of such spanners. This case being extreme, we investigate the problem of finding a bi-spanner of a certain size $k$. We already know, from the earlier sections, that this problem is hard, because spanning trees are a particular case. In fact, we strengthen this result by showing that \BS is hard even in simple temporal graphs (where spanning trees were tractable). This motivates future work for identifying special cases by other means (e.g., specific parameters).

Recall, from the introduction, that sparse temporal spanners (let apart bidirectionality) are not guaranteed to exist in temporal graphs, as there exist temporal graphs with $\Theta(n^2)$ edges, all of which are critical for temporal connectivity~\cite{AF16}. However, if the footprint is a \emph{complete} graph and the setting is either non-strict or proper, then spanners of size $O(n \log n)$ exist no matter what the labeling is~\cite{CPS19}.
The following shows that no analog result exists for bidirectional spanners.

\begin{lemma}
  \label{lem:clique}
  Let $\G=((V,E),\lambda)$ be a happy clique, then $\G$ is bidirectionally temporally connected, but for every $e \in E, \G \setminus \{e\}$ is not temporally connected.
\end{lemma}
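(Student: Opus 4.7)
The plan is to handle the two halves of the statement separately; both follow from the trivial observation that in a happy graph every edge is itself a bi-path, together with a parity-style argument forced by properness.

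For the positive direction, I would use the fact that $\G$ is a complete graph, so for every pair $u,v$ there is an edge $uv\in E$ with a single label $t$. This edge alone realises a bi-path between $u$ and $v$ of length $1$ (the journeys $u\to v$ and $v\to u$ both use the contact $(uv,t)$, and a single contact is trivially non-decreasing in time). Hence every pair of vertices shares a bi-path in $\G$, so $\G$ is bidirectionally temporally connected.

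For the negative direction, fix an edge $e=uv\in E$ and suppose for contradiction that a bi-path $P$ between $u$ and $v$ exists in $\G\setminus\{e\}$. Write the underlying footprint path of $P$ as $u=w_0,w_1,\dots,w_\ell=v$ with edge labels $t_1,t_2,\dots,t_\ell$ (one label per edge because $\G$ is simple). Since $e$ has been removed, $\ell\ge 2$. The journey from $u$ to $v$ forces $t_1\le t_2\le\cdots\le t_\ell$, while the journey from $v$ to $u$, which must traverse the same underlying path in reverse, forces $t_\ell\le t_{\ell-1}\le\cdots\le t_1$. Combining the two chains of inequalities yields $t_1=t_2=\cdots=t_\ell$. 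But consecutive edges $(w_{i-1}w_i)$ and $(w_iw_{i+1})$ share the vertex $w_i$, and properness of $\G$ forces their labels to differ, a contradiction as soon as $\ell\ge 2$. (The argument is identical in the strict setting, where the inequalities become strict and already fail for $\ell\ge 2$ without needing properness; this matches the general fact that strict and non-strict coincide on proper graphs.)

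The only place where anything subtle could hide is confirming that the bi-path definition indeed requires the same underlying footprint path traversed in reverse, so that the two monotonicity constraints can be stacked on the same sequence $(t_1,\dots,t_\ell)$. This is given by the paper's definition of bi-path, so no further obstacle remains. Combining the two directions, $\G$ itself admits a bi-path between every pair while $\G\setminus\{e\}$ does not for any $e\in E$, proving the lemma. \qed
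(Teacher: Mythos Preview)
Your proof is correct and follows essentially the same approach as the paper: the positive direction uses that each edge of the clique is itself a bi-path, and the negative direction observes that after removing $uv$ any bi-path has length at least $2$, whereupon the two opposite monotonicity constraints force all labels equal, contradicting properness. The paper states this more tersely (``the labels along such a journey must increase, thus the path is not reversible''), but the underlying argument is the same.
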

\begin{proof}
  The fact that $\G$ is bidirectionally temporally connected is straightforward, as every edge in $\G$ forms a bi-path between its two endpoints and the footprint is a complete graph. Now, if an edge $\{u,v\} \in E$ is removed, then the only way $u$ and $v$ can reach each other is through a temporal path of length at least $2$. Since the labeling is happy, the labels along such a journey must increase, thus the path is not reversible. \lncsqed
\end{proof}

Strictly speaking, Lemma~\ref{lem:clique} makes the algorithm of Section~\ref{sec:bs} worst-case optimal; however, for reasons that pertain to bidirectionality rather than to the algorithm itself, which is not satisfactory. A natural question is whether small bi-spanners can be found when they exist. In the following, we show that the answer is negative even in simple temporal graphs. We do so in the non-strict setting, since bi-spanner in simple temporal graphs in the strict setting are always cliques (see Lemma~\ref{lem:clique}), which is easy to test.

\begin{theorem}
  \label{th:bispanner-hard}
  \BS is NP-complete in the simple and non-strict setting.
\end{theorem}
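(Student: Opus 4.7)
The plan is to show $k$-\BS is in NP and then prove NP-hardness by reduction from a covering-style NP-hard problem, naturally \textsc{Set Cover}. NP-membership is immediate: given a candidate subgraph with at most $k$ edges, one can verify in polynomial time that it is a bi-spanner, either by running the algorithm of Section~\ref{sec:bs} from each vertex, or—thanks to the reformulation below—by a BFS inside each snapshot checking that every pair of vertices shares a component in at least one snapshot.

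The structural key driving the hardness reduction is the following reformulation, specific to the simple non-strict setting: every bi-path must be \emph{monochromatic}. Indeed, since each edge carries a single label, the forward and backward journeys along the shared footprint both require non-decreasing labels, which forces all labels along the path to be equal. Consequently, $k$-\BS in this setting is exactly the problem of selecting at most $k$ edges so that every pair of vertices lies in a common connected component of some single snapshot.

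With this reformulation in hand, I would reduce from \textsc{Set Cover} $(U,\mathcal{S},k)$ with $|U|=n$, $|\mathcal{S}|=m$. The construction would create one element vertex $u_i$, one ``representative'' vertex $c_j$ per set $S_j$, and one root $r$; the snapshot at time $t_j$ would contain the star-gadget for $S_j$, namely $rc_j$ together with $\{c_j u_i : u_i \in S_j\}$. All edges receive a unique label, so the temporal graph is simple. A separate ``backbone'' snapshot (restricted to $\{r,c_1,\dots,c_m\}$) provides a cheap, constant-cost way to bi-connect all auxiliary vertices among themselves. The target bound $k'$ is then chosen so that a bi-spanner of size at most $k'$ exists iff $\mathcal{S}$ has a cover of size $k$: one direction activates the $k$ corresponding star-gadgets plus the backbone, and the other direction uses monochromaticity to trace, for each element $u_i$, a single snapshot bi-connecting $u_i$ to $r$, thereby witnessing a set $S_j \ni u_i$ that covers $u_i$.

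The main obstacle is the interplay between \emph{simplicity} and the \emph{universal} bi-spanner requirement. On one side, since each footprint edge can carry only one label, the auxiliary vertices needed to encode set membership (the $c_j$'s) cannot be bi-connected through the same edges as the element gadgets, which forces the addition of some backbone snapshot. On the other side, this backbone must be carefully designed—restricted to auxiliary vertices only, or otherwise sparse between element vertices—to avoid giving a cheap, one-snapshot spanning subgraph that bypasses the set cover structure and decouples the bi-spanner cost from the hardness of the source problem. Balancing these two pressures while keeping the bi-spanner cost an affine function of the set cover value is the delicate part of the construction; once it is in place, the two-way correctness argument follows by accounting edges in monochromatic components of the selected subgraph.
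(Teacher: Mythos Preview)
Your high-level plan is the right one and matches the paper: the monochromatic reformulation is exactly the structural key, and \textsc{Set Cover} is the source problem. The gap is in the backbone. With a backbone snapshot restricted to $\{r,c_1,\dots,c_m\}$, the constructed graph $\G$ is itself \emph{not} bidirectionally connected: two element vertices $u_i,u_{i'}$ that belong to no common set share no snapshot, and likewise $u_i$ shares no snapshot with any $c_j$ for which $u_i\notin S_j$. Hence every instance---including YES instances of \textsc{Set Cover}---maps to a graph with no bi-spanner at all, and the reduction collapses. You flag this tension in your last paragraph, but the two options you list (backbone on auxiliary vertices only, or ``sparse between element vertices'') do not resolve it: the first is the broken case just described, and the second, if it puts element vertices into a common snapshot, risks giving $u_j$ a cheap route to $r$ that bypasses the cover.

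The paper's fix is to add \emph{two} auxiliary vertices $v_1,v_2$ rather than a single backbone snapshot. Vertex $v_1$ sits in one snapshot as the center of a star over \emph{all} $s_i$ and \emph{all} $u_j$ (so every element--element and element--set pair is bi-connected through $v_1$), while $v_2$ sits in another snapshot as the center of a star over $v$ and all $s_i$. Crucially, $v_1$ and $v_2$ are also given a direct edge to every remaining vertex at a \emph{unique} time; this forces every edge incident to $v_1$ or $v_2$---exactly $2|V|-3$ edges---into any bi-spanner, because those are the only length-$1$ (hence only) bi-paths available to them. These forced edges already bi-connect every pair except $v\leftrightarrow u_j$, and the only monochromatic $v$--$u_j$ paths are $v\,s_i\,u_j$ inside some set-snapshot~$i$, which is precisely the cover constraint. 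The accounting then becomes clean: a bi-spanner of size $3n+2m+3+k$ exists iff a cover of size~$k$ does.
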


\begin{proof}
  Given a candidate spanner $\mathcal{G}' \subseteq \mathcal{G}$, it is easy to check whether $\G'$ uses at most $k$ edges and whether it is bidirectionally connected (using the algorithm of Section~\ref{sec:bs}), thus the problem is in NP. To show that this is NP-hard, we reduce from \textsc{Set Cover}, defined as follows: given a set~$\mathcal{U}$ of $n$ elements (the universe) and a collection ${\cal S}=\{S_1, \dots,S_m\}$ of $m$ subsets of $\mathcal{U}$ (whose union is $\mathcal{U}$), is there a union of at most $k$ sets of~${\cal S}$ that is equal to $\mathcal{U}$?

  Let $\mathcal{I}$ be an instance of \textsc{Set Cover}, we construct a temporal graph $\G=((V,E),\lambda)$ with $n+m+3$ vertices such that $\mathcal{I}$ admits a solution of size $k$ if and only if a bi-spanner of size $3n + 2m + 3 + k$ exists in $\G$.
  The vertices $V$ are:
  \begin{itemize}
    \item one vertex $u_i$ for each element of $\mathcal{U}$;
    \item one vertex $s_i$ for each subset $S_i$;
    \item three auxiliary vertices $v, v_1,v_2$ that are used for connectivity.
  \end{itemize}

  Since $\mathcal{G}$ is simple and in the non-strict setting, each bidirectional journey must use edges from the same snapshot, thus the order of the time labels is irrelevant, what matters is only which sets of vertices are connected in a same snapshot.
  The edges are constructed as follows (see Figure~\ref{fig:spanner-btc} for an illustration):
  \begin{itemize}
  \item $s_i$ shares an edge at time $i$ with $v$ and with each $u_j \in S_i$;
    \item $v_1$ shares an edge at time $m+1$ with all vertices~$s_i$ with $S_i\in {\cal S}$ and with all $u_j\in \mathcal{U}$;
    \item $v_2$ shares an edge at time $m+2$ with $v$ and with all $s_i\in {\cal S}$;
    \item $v_1$ and $v_2$ share an edge with all the other nodes (and with each other) at unique arbitrary times.
  \end{itemize}

  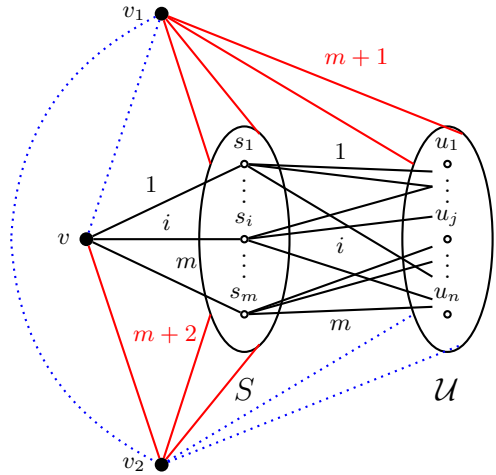
\begin{figure}[!ht]
    \centering
    \begin{tikzpicture}[every loop/.style={}, thick, node distance=1.2cm,auto]
      \tikzset{vertex/.style={draw, circle, inner sep=0.55mm,fill=black}}
      \tikzset{invertex/.style={draw, circle, inner sep=0.3mm}}
      \tikzstyle{sameedge} = [draw,red]
      \tikzstyle{properedge} = [draw,dotted,blue]
      \tikzstyle{staredge} = [draw]

      \node[vertex] (v) at (0,0) [label=left:$v$] {};
      \node[vertex] (u1) at (1,3) [label=left:$v_1$] {};
      \node[vertex] (u2) at (1,-3) [label=left:$v_2$] {};

      \node[invertex] (s1) at (2.1,1) [label=$s_1$] {};
      \node at (2.1,0.75) {\vdots};
      \node[invertex] (si) at (2.1,0) [label=$s_i$] {};
      \node at (2.1,-0.25) {\vdots};
      \node[invertex] (sm) at (2.1,-1) [label=$s_m$] {};

      \node[invertex] (x1) at (4.8,1) [label=$u_1$] {};
      \node at (4.8,0.75) {\vdots};
      \node[invertex] (xj) at (4.8,0) [label=$u_j$] {};
      \node at (4.8,-0.25) {\vdots};
      \node[invertex] (xn) at (4.8,-1) [label=$u_n$] {};

      \draw (2.1,0) ellipse (0.6cm and 1.5cm);
      \node (S) at (2.1,-2) {\large ${\cal S}$};
      \draw (4.8,0) ellipse (0.6cm and 1.5cm);
      \node (X) at (4.8,-2) {\large $\mathcal{U}$};

      \begin{scope}
        \draw (v) edge[staredge] node {$1$} (s1);
        \draw (s1) edge[staredge] node {$1$} (4.6,0.9);
        \draw (s1) edge[staredge] (4.6,0.7);
        \draw (s1) edge[staredge] (4.6,-0.5);
        \draw (v) edge[staredge] node {$i$} (si);
        \draw (si) edge[staredge] (4.6,0.7);
        \draw (si) edge[staredge] node[below] {$i$} (4.6,0.3);
        \draw (si) edge[staredge] (4.6,-0.8);
        \draw (v) edge[staredge] node {$m$} (sm);
        \draw (sm) edge[staredge] (4.6,-0.1);
        \draw (sm) edge[staredge] (4.6,-0.3);
        \draw (sm) edge[staredge] node[below] {$m$} (4.6,-0.9);

        \draw (v) edge[sameedge] node {$m+2$} (u2);
        \draw (u2) edge[sameedge] (1.65,-1);
        \draw (u2) edge[sameedge] (2.3,-1.4);
        \draw (u2) edge[properedge] (4.35,-1);
        \draw (u2) edge[properedge] (5,-1.4);
        \draw (v) edge[properedge] (u1);
        \draw (u1) edge[sameedge] (1.65,1);
        \draw (u1) edge[sameedge] (2.3,1.4);
        \draw (u1) edge[sameedge] (4.35,1);
        \draw (u1) edge[sameedge] node {$m+1$} (5,1.4);
        \draw (u1) edge[properedge, bend right, bend angle = 60] (-1,0);
        \draw (-1,0) edge[properedge, bend right, bend angle = 60] (u2);
      \end{scope}
    \end{tikzpicture}
    \caption{The reduction, dotted edges have a distinct label from their neighborhood}
    \label{fig:spanner-btc}
  \end{figure}

  \begin{nclaim}
    $\G$ is bidirectionally temporally connected.
  \end{nclaim}
  \begin{proof}
    To start, observe that $v_1$ and $v_2$ both share an edge with all the other vertices and with each other, they are thus bi-connected to all vertices. Moreover, each vertex of ${\cal S}$ is bi-connected to $v$ and to other vertices in ${\cal S}$ through $v_2$, and to all vertices of $\mathcal{U}$ through $v_1$. Vertices of~$\mathcal{U}$ are also bi-connected to each other through $v_1$. Finally, for each $u_j\in \mathcal{U}$, at least one set $S_i$ contains it, so $u_j$ is bi-connected to $v$ at the corresponding time $i$.\lncsqed
  \end{proof}

  Now, observe that $v_1$ and $v_2$ are only bi-connected to other vertices through bi-paths of length $1$, thus none of their incident edges can be removed. This implies that any bi-spanner must use at least these $2|V|-3 = 2n + 2m + 3$ edges (twice $|V|-1$, minus their common edge). These edges will always bi-connect all the vertices in $\mathcal{U}$ with each other, all the vertices of ${\cal S}$ with each other, and all the vertices in $\mathcal{U}$ with all the vertices in ${\cal S}$.

  ($\Rightarrow$) If there is a solution of size $k$ to the \textsc{Set Cover} instance~$I$, then a bi-spanner of size $3n+2m+k+3$ can be built as follows: for each $s_i$ selected in the solution, keep the edge $vs_i$ and remove all edges~$vs_j$ for which the set~$S_j$ is not part of the solution. For each $u_j$, keep an edge to one of the $s_i$ for which set~$S_i$ contains~$u_j$ and that is selected in the solution. Remove all other edges between the~$s_i$ and~$u_j$ vertices. Then all $u_j$ are still bi-connected to $v$. This makes $k+n$ edges, plus the above required edges, which is $3n+2m+k+3$ in total.

  ($\Leftarrow$) If a bi-spanner of size $3n+2m+k+3$ exists, then it contains at least the above required edges, so in total only $k+n$ edges overall are kept between $v$ and vertices of ${\cal S}$ and between vertices of ${\cal S}$ and vertices of $\mathcal{U}$. The fact that each vertex in $\mathcal{U}$ remains bi-connected to $v$ implies that at least one edge is kept for each of them to a vertex~$s_i$ for some set~$S_i\in {\cal S}$, which makes at least $n$ edges, and that the corresponding vertex~$s_i$ still shares an edge with $v$. There are at most $k$ such edges, thus there exists a subset of ${\cal S}$ of size at most $k$ that covers all the nodes in $\mathcal{U}$.
  \lncsqed
\end{proof}

\section{Parameterized algorithms for tree-like graphs}\label{sec:fes}

In this section, we show that \TST is FPT for parameter~$\fes$ and \BS is FPT for parameter combination~$\fes + \ell$, where~$\fes$ denotes the feedback edge set number of the underlying graph and~$\ell$ denotes the maximum number of labels of any edge.
Here, the \emph{feedback edge set number} $\fes$ of a static graph~$G$ is defined as the size of the smallest edge set~$F$, such that~$G-F$ is acyclic.
Such an edge set is also called a~\emph{feedback edge set}.
To present both of our algorithms, we use some preprocessing steps and definitions from the literature~\cite{EMM25}.

\begin{definition}
The 2-core of a graph~$G$ is the unique largest vertex set~$V^*$, such that each vertex of~$V^*$ has at least two neighbors in~$V^*$.
\end{definition}

Note that in both~\TST and \BS, each solution has to keep all edges that have at most one endpoint in~$V^*$, as each such edge is an edge-cut of size~1 of~$G$.
Thus, we only have to decide which edges we keep and remove that have both their endpoints in~$V^*$.
We show that we can decompose the edges of~$V^*$ in few parts as follows.
Let~$X\subseteq V^*$ such that~$X \supseteq \bigcup \{N[v]\mid v\in V^*, |N(v)| \geq 3\}$, that is, $X$ contains (at least) all vertices of degree at least~3 in~$G[V^*]$ and their neighbors.

\begin{definition}
Let~$X\subseteq V^*$ as specified above.
A path~$C$ in~$G[V^*]$ is an~\emph{$X$-connector} if (i)~the endpoints of~$C$ are in~$X$ and non-adjacent with each other and (ii)~all vertices of~$C$ have degree exactly~$2$ in~$G[V^*]$.
Moreover, the \emph{extension of~$C$} is the tree~$T$ obtained from~$C$ by adding all vertices of~$V\setminus V^*$ for which the closest neighbor in~$V^*$ is a vertex of~$V(C)\setminus X$.
\end{definition}

Erlebach et al.~\cite{EMM25} showed that one can compute a small set of vertices~$X$ with some useful properties.

\begin{lemma}[\cite{EMM25}]
In polynomial time, one can compute a set~$X$ with~$\bigcup \{N[v]\mid v\in V^*, |N(v)| \geq 3\}\subseteq X\subseteq V^*$, such that (i)~$X$ has size~$\Oh(\fes)$, (ii)~there are~$\Oh(\fes)$ edges between vertices of~$X$, (iii)~there are~$\Oh(\fes)$ many~$X$-connectors, and (iv)~each edge of~$G[V^*]$ is either between vertices of~$X$ or part of exactly one~$X$-connector.
\end{lemma} 

In the following, let~$X$ be a vertex set with these properties that we initially compute in polynomial time.
Note that for each~$X$-connector~$C$, we can remove at most one edge of~$C$, as we would increase the number of connected components by~1 for whichever two edges we remove of~$C$.
Thus, in both~\TST and \BS, we can only decide for each~$X$-connector to remove either no edge of~$C$ or exactly one edge of~$C$.
Let~$\mc$ denote the set of all~$X$-connectors.
As~$|\mc| \in \Oh(\fes)$ and there are~$\Oh(\fes)$ many edges~$E(X)$ between the vertices of~$X$, in both of our algorithms, we initially branch in all possibilities~$(E^*,\mc^*)$ with~$E^*\subseteq E(X)$ and~$\mc^*\subseteq \mc$ and check whether there is a solution that agrees with the choice of~$(E^*,\mc^*)$, that is, a solution that (i)~contains exactly the edges~$E^*$ of~$E(X)$, (ii)~contains all edges of the~$X$-connectors of~$\mc^*$ and their endpoints, and (iii)~contains all but exactly one edge of each~$X$-connectors in~$\mc\setminus \mc^*$.
As these are~$2^{\Oh(\fes)}$ possibilities, we can afford this branching, as both of our algorithms will be parameterized by the feedback edge set number.
For a given choice~$(E^*,\mc^*)$, we call the subgraph containing exactly the edges of~$E^*$ and the edges of all~$X$-connectors of~$\mc^*$ a~\emph{backbone}.
Note that if the backbone is not connected, then this choice cannot lead to a solution, as the only way to connect different connected components would go over~$X$-connectors in~$\mc\setminus \mc^*$, that is, over $X$-connectors for which we chose to remove exactly one edge.
Another exclusion criteria is if the feedback edge set number of the backbone is larger than~$k-(n-1)$, in this case, the remaining budget would not be enough to obtain a connected graph from the backbone that contains all vertices of~$V$.   
In particular, this implies that for~\TST, the backbone is a tree, or the choice is not valid.
For the remainder of the section, we describe our algorithms for a given choice~$(E^*,\mc^*)$.
Essentially in both algorithms, we need to determine which concrete edge of each of the connectors of~$\mc\setminus \mc^*$ we can safely remove while preserving bi-paths over the backbone to all other vertices.
This choice of which edge to remove per connector needs to be consistent over all connectors.
We now distinguish between the case of~\TST and \BS.

\subsection{The algorithm for \TST}
We start with our algorithm for~\TST. Here, we make use of the fact that each solution will have a pivot contact or pivot vertex, which we can determine by enumerating all possible candidates for them.

\begin{theorem}
\TST can be solved in $2^{\Oh(\fes)}\cdot n^{\Oh(1)}$~time.
\end{theorem}
\begin{proof}
Recall that we can enumerate all candidate backbones in $2^{\Oh(\fes)}\cdot n^{\Oh(1)}$~time.
To prove the algorithm, we show that, a given backbone~$G'$ for a choice~$(E^*,\mc^*)$, we determine in polynomial time, whether for each~$X$-connector of~$\mc\setminus \mc^*$, we can keep all but exactly one edge of that~$X$-connector to preserve temporal connectivity.
To show this result, we exploit the fact that in every \TC spanning tree there is a pivot contact or a pivot vertex~\Cref{pivot in trees}.
This allows us to show that the choice of which edge of the connector to remove has no influence on the choices of the other connectors.

Assume that there is a solution~$\G'$ to our problem that extends the given backbone~$G'$.
Then, this solution contains a pivot contact or a pivot vertex due to~\Cref{pivot in trees}.
As the number of contacts and temporal vertices is linear in the input size, we can iterate efficiently over all candidate contacts~$(uv,t)$ and candidate temporal vertices~$(v,t)$ and check whether there is a solution that agrees with the choice of the backbone and uses~$(uv,t)$ as the pivot contact or~$(v,t)$ as a pivot vertex.
In the remainder, we only discuss the case for pivot contacts. The case for pivot vertices can be handled analogously.
Assume for simplicity that the edge~$uv$ is not part of the extension of any $X$-connector in~$\mc\setminus \mc^*$.
(We argue the case where this is the case at the end.)
The problem thus boils down to deciding whether there is one edge per $X$-connector in~$\mc\setminus \mc^*$, such that removing that edge from the connector still allows all vertices in both resulting sides of the extension of the connector to reach~$uv$ prior to time~$t$ and be reached by both endpoints of~$uv$ starting aftertime~$t$.
As there are only $\Oh(m)$~edges in each connector~$C$, we can decide this in polynomial time, by iterating over all possible edges to remove and afterwards checking whether the earliest arrival and latest departure times towards the desired pivot contact are preserved.
If this is possible for at least one edge of the connector, we continue with the next connector and answer YES if we considering the last connector.
Otherwise, we continue with the next candidate for the pivot contact, as there is no possible way to cut at least one connector~$C$ to preserve the property of~$uv$ being a pivot contact at time~$t$. 
In the case where the pivot contact is part of the extension of some~$X$-connector in~$\mc\setminus \mc^*$, we handle this $X$-connector~$C$ initially in a separate way.
As there has to be one edge~$e\neq uv$ removes from~$C$, we can also initially iterate over all choices of which concrete edge of~$C$ to remove.
For each such choice, we remove the edge~$e$ from our instance and treat the two resulting parts of the connector as parts of the backbone.
For all other connectors, we proceed as before with this updated backbone.

The algorithm is correct, since if there is a solution to the problem, this solution has to agree with some choice of the backbones we iterated over (as we consider all possibilities).
Moreover, the solution contains a pivot structure due to~\Cref{pivot in trees} and we iterate over all possible such pivot structures.

Finally, we obtain the stated running time, since 
\begin{itemize}
\item we iterate initially over~$2^{\Oh(\fes)}$ choices for the backbone,
\item we iterate over linearly many choices for the pivot structure,
\item we potentially iterate for one dedicated connector over all its edges, and
\item for each other connector in~$\mc \setminus \mc^*$, we independently check whether there is some edge to remove to fulfill the properties of the pivot structure in the solution.
\end{itemize}
As all other checks run in polynomial time, we obtain a total running time of $2^{\Oh(\fes)}\cdot n^{\Oh(1)}$~time.
\lncsqed
\end{proof}

\subsection{The algorithm for \BS}

Finally, we present our parameterized algorithm for \BS.
Again, we try to extend a backbone to a solution to the problem.
Unfortunately, this time (that is, for \BS), the choices of which edges to remove in each connector is not independent between the individual connectors as it was for \TST.
This is due to the fact that there does not necessarily exist a pivot structure in the case, where the temporal spanner is not a tree.
However, if we include the number of labels per edge in our parametrization, we can still obtain an FPT-algorithm.  
Let $\ell := \max_{e\in E} |\lambda(e)|$ denote the maximum number of labels per edge.

\begin{theorem}
\BS can be solved in $\max(2,\ell)^{\Oh(\fes)}\cdot n^{\Oh(1)}$~time, where~$\ell$ denotes the maximum number of labels per edge.
\end{theorem}
\begin{proof}
Recall that we can enumerate all candidate backbones in $2^{\Oh(\fes)}\cdot n^{\Oh(1)}$~time.
To prove the algorithm, we show that, a given backbone~$G'$ for a choice~$(E^*,\mc^*)$, we determine in $\max(2,\ell)^{\Oh(\fes)}\cdot n^{\Oh(1)}$~time whether for each~$X$-connector of~$\mc\setminus \mc^*$, we can keep all but exactly one edge of that~$X$-connector to preserve the property of being a bi-spanner.
To show this result, we essentially show that we only need to consider $\Oh(\ell^4)$ possible edges to cut for each of the connectors. 
The total algorithm then branches over the $\Oh(\ell^4)$ choices for all connectors simultaneously and checks whether the resulting temporal graph is still a bi-spanner by using the algorithm behind~\Cref{bs poly}.

Consider the extension~$Q$ of any~$X$-connector~$C$ of~$\mc\setminus \mc^*$.
Let~$v_1$ and~$v_2$ be the endpoints of~$C$ and let~$e_1$ and~$e_2$ be the first and last edge of~$C$ respectively.
For each edge~$e$ of~$C$ distinct from~$e_1$ and~$e_2$ and each~$i\in \{1,2\}$, let~$Q_i$ denote the connected component of~$G[Q] - e$ that contains~$v_i$.
We will define the `relevant' labels of edge~$e_i$ after the removal of~$e$ as follows:
We set~$\alpha_{i}^e$ to be the smallest label of edge~$e_i$ such that each vertex of~$Q_i$ can traverse edge~$e_i$ at time~$\alpha_{i}^e$ in~$\G-e$.
If no such label $\alpha_{i}^e$ exists, then cutting edge~$e$ would destroy temporal connectivity. 
Similarly, we set~$\omega_{i}^e$ to be the largest label of edge~$e_i$ such that each vertex of~$Q_i$ can be reached after traversing edge~$e_i$ at time~$\omega_{i}^e$ in~$\G-e$.
Again, if no such label $\omega_{i}^e$ exists, then cutting edge~$e$ would destroy temporal connectivity.

Now consider the set of label tuples~$\mathcal{L(C)} := \lambda(e_1) \times \lambda(e_1) \times \lambda(e_2) \times \lambda(e_2)$. 
As each edge has at most~$\ell$ labels, $L(C)$ has size at most~$\ell^4$.
For each~$\pi:=(\alpha_1,\omega_1,\alpha_2,\omega_2)\in \mathcal{L}$, we essentially only need to consider a single candidate edge of~$C$ to remove (if one exists).
That is, we define~$e_\pi$ to be any edge of~$C$, such that~$\G-e_\pi$ still has a bi-spanner and such that for each~$i\in \{1,2\}$, $\alpha_i^{e_\pi} = \alpha_i$ and~$\omega_i^{e_\pi} = \omega_i$.
If such an edge~$e_\pi$ exists, we add it to the set~$R(C)$, which also contains the edges~$e_1$ and~$e_2$.

We now argue that it suffices to branch only on the edges of~$R(C)$ to find a fitting edge to remove from~$C$, if there is any edge that can be removed.
Assume that there is a solution~$\G^*$ that extends the backbone and that cuts for some connector~$C\in \mc\setminus \mc^*$ an edge~$e$ of~$C$ that is not in~$R(C)$.
Moreover, each vertex of~$Q$ needs to be able to traverse both edges~$e_1$ and~$e_2$ in~$\G^*$ via temporal paths.
Hence all values of~$\pi = (\alpha_1^e,\omega_i^e,\alpha_2^e,\omega_2^e)$ are defined.
As~$\pi\in \mathcal{L}(C)$ and $\G-e$ has a bi-spanner by necessity, we conclude that the edge~$e_\pi$ is defined.
Hence, adding the edge~$e$ back to the solution and cutting~$e_\pi$ instead also yields a solution, as all vertices of the extension~$Q$ of~$C$ can still traverse the edges~$e_1$ and~$e_2$ at the exactly same times as before.
Hence, we obtain a solution that has less edges outside of~$R(C)$ for all connectors.
Thus, applying this argument inductively, we get a solution, where each edge that is cut from any connector~$C$ is part of the respective set~$R(C)$.

Based on this fact, we now describe the algorithm formally.
For each choice of the backbone, the algorithm computes the set~$R(C)$ for each connector~$C\in \mc\setminus \mc^*$.
Afterwards, one iterates over all possible combinations of edges over all~$R(C)$ sets.
That is, over~$R(C_1)\times \dots \times R(C_{|\mc\setminus \mc^*|})$.
For each such combination, we remove all the respective edges from the temporal graph and check whether the resulting temporal graph obtained from the backbone and all remaining edges of the extensions of connector paths still contains a bi-spanner.
If this is the case, we answer YES. 
Otherwise, if this fails for all possible combinations of edges, we continue with another choice for the backbone.
If this procedure does not answer YES for at least one backbone, we eventually answer NO.

By the above argumentation, the algorithm is correct.
It remains to show the running time.
We iterate over $2^{\Oh(\fes)}$ backbones, for each such backbone, we compute all the sets~$R(C)$ in polynomial time, and afterwards iterate over all possible combinations of edges.
As~$|R(C)|\in \Oh(\ell^4)$ and there are $\Oh(\fes)$~connectors, we iterate over $\max(2,\ell)^{\Oh(\fes)}$ such edge combinations.
For each such combination, we then check whether the resulting temporal graph still contains a bi-spanner in polynomial time due to the algorithm behind~\Cref{bs poly}. 
 This completes the proof that \BS can be solved in $\max(2,\ell)^{\Oh(\fes)}\cdot n^{\Oh(1)}$~time.\lncsqed
\end{proof}

\section{Conclusion}
\label{sec:conclusion}

In this paper, we answered a fundamental question in temporal graphs, namely, whether deciding the existence of a \TC spanning tree is tractable. It was already known that finding temporal spanners of minimum size is a hard problem. Unfortunately, we showed that even in the extreme case that the spanner is a tree, the problem is NP-complete. Our reduction rely on a proper temporal graph, and as such, applies to both strict and non-strict journeys. Motivated by this negative result, we explored different ways to relax spanning trees in temporal graphs. In particular, we showed that the property of bidirectionality without a prescribed size can be tested efficiently, but it is harder in than deciding \TC spanning trees otherwise. Still, we believe that testing bidirectionality could prove useful in certain applications, and this adds to a small collection of non-trivial properties that remain tractable in temporal graphs. As already discussed, such applications might be related to routing or security. They also include transportation networks, where bus or train lines benefit from being operated along the same segments in both directions.
Beyond these basic results, we presented two FPT algorithms, one for each problem, where the parameter for \TC spanning tree is \fes and the parameters for bidirectional spanner are \fes + the maximum number of labels per edge.
Finally, of independent interest, we proved that \TC trees always contain pivot contacts or pivot vertices.
Some questions remain open, in particular, what additional restrictions could force the existence of small bidirectional spanners, and what parameters could make their computation tractable.
\begin{question}
  If the bi-path distance between pairs of vertices is large on average (over all the pairs), does this guarantee the existence of a small bidirectional spanner? Could such distances be large in a dense footprint?
\end{question}

Beyond structural questions, a natural approach for tractability is to explore further parameters with respect to which the problem is fixed-parameter tractable.

\begin{question}
  For which other parameters do \BS and \TST admit FPT algorithms? In particular, is \BS FPT for $\fes$ alone and does \TST admit an FPT algorithm for a parameter smaller than~$\fes$, for example the treewidth of the footprint.
\end{question}

\bibliographystyle{plainurl}
\bibliography{main.bib}

\end{document}